\newtheorem{lemma}{Lemma}
\newtheorem{theorem}{Theorem}
\newtheorem{corollary}{Corollary}
\newtheorem{proposition}{Proposition}
\newcommand{\Aut}{\mathrm{Aut}}
\newcommand{\PAut}{\mathrm{PAut}}
\newcommand{\wt}{\mathrm{wt}}
\newcommand{\GL}{\mathrm{GL}}
\newcommand{\GA}{\mathrm{GA}}
\newcommand{\supp}{\mathrm{supp}}
\begin{document}
\renewcommand{\refname}{References}
\renewcommand{\proofname}{Proof.}
\renewcommand{\tablename}{Table}
\renewcommand{\dim}{\mathrm{dim}}

\title[Coordinate transitivity of extended perfect codes and their SQS
]{Coordinate transitivity of a class of\\ extended perfect codes
and their SQS
%%%from regular subgroups of $GA(r,2)$
}
\author{{I. Yu. Mogilnykh, F. I. Solov'eva}}%

\address{Ivan Yurevich Mogilnykh
\newline\hphantom{iii} Tomsk State University, Regional Scientific and Educational Mathematical Center,
\newline\hphantom{iii} pr. Lenina, 36,
\newline\hphantom{iii} 634050, Tomsk, Russia,
\newline\hphantom{iii} Novosibirsk State University,
\newline\hphantom{iii} Pirogova street, 1,
\newline\hphantom{iii} 630090, Novosibirsk, Russia
\newline\hphantom{iii} Sobolev Institute of Mathematics,
\newline\hphantom{iii} pr. ac. Koptyuga, 4,
\newline\hphantom{iii} 630090, Novosibirsk, Russia}
\email{ivmog@math.nsc.ru}

\address{Faina Ivanovna Solov'eva
\newline\hphantom{iii} Sobolev Institute of Mathematics,
\newline\hphantom{iii} pr. ac. Koptyuga, 4,
\newline\hphantom{iii} Novosibirsk State University,
\newline\hphantom{iii} Pirogova street, 1,
\newline\hphantom{iii} 630090, Novosibirsk, Russia}
\email{sol@math.nsc.ru}%

%\thanks{\sc Ivanov, I.I., On some problems of commutative algebra}
\thanks{\copyright \ 2020  I. Yu. Mogilnykh, F. I. Solov'eva}
\thanks{\rm This work was supported by the Ministry of Science and Higher Education of Russia (agreement No. 075-02-2020-1479/1).}
%\thanks{\it Поступила 1 ноября 2012 г., опубликована 1 декабря 2012 г.}%

%\top

\maketitle

\begin{quote}
{\small \noindent{\sc Abstract.} %A code $C$ is called transitive
%(propelinear) if there is a subgroup $G$ of $\Aut(C)$ acting
%transitively (propelinearly)on the code\-words of $C$.
%If the order
%of $G$ coincides with the size of $C$, the code is called
%propelinear.
 We continue the study of the class of binary extended perfect propelinear codes constructed in the previous paper and consider their permutation automorphism
 (symmetry) groups and Steiner quadruple systems.
We show that the automorphism group of the SQS of any such code
coincides with the permutation automorphism group of the code. In
particular, the SQS of these codes are complete
invariants for the isomorphism classes of these codes. We obtain
a criterion for the point transitivity of the automorphism group
of SQS of proposed codes in terms of $\GL$-equivalence (similar to
EA-type equivalence for permuta\-tions of $F^r$). Based on these results we suggest a
new construction for coordinate transitive and neighbor
transitive extended perfect codes.

\medskip

\noindent{\bf Keywords:}   extended perfect code, concatenation
construction, transitive code, neighbor transitive code,
transitive action,  regular subgroup, isomor\-phism problem,
transitive Steiner quadruple system, coordinate transitive code}
\end{quote}

\section{Introduction}
Propelinear and linear codes share many similar properties and
concepts. Among propelinear codes there are several important
classes of optimal and optimal-related codes such as $Z_4$ and
$Z_2Z_4$-linear, some binary perfect and  extended perfect,
Preparata, Kerdock codes, etc. For more details we refer to the
survey in the introductory part of \cite{MS19}.

The topic of this paper is concerned to the classical problem in
coding theory: can a linear (cyclic, propelinear) code  in a
particular class  be compactly represented, e.g. by its minimum
weight codewords?
 This is true for the Hamming codes, the Reed-Muller codes and the class of extended cyclic codes related to Gold functions \cite{GK}, \cite{MSB}.
 We refer to these works for the surveys on this question.
 %A similar property also holds for a class of quaternary codes with connected minimum distance graphs, in particular
%for quaternary Preparata codes.
These type of problems often arise in  testing theory \cite{KL}
and can find an application in cryptography.

Another property of interest is the $k$-transitivity of the
permutation automor\-phism group of a code. An analogous concept
for designs is known as the $k$-point transitivity of their
 automorphism groups. The {\it permutation automorphism group} of a code $C$ denoted by
$\PAut(C)$ (also called the symmetry group of $C$) is the setwise
stabilizer of $C$ in $S_n$, i.e.
$$\PAut(C)=\{\pi: \pi(C)=C\}.$$

 As far as
perfect codes with minimum distance $3$ are concerned, the Ham\-ming
codes are the only known examples with transitive permutation
automorphism group (actually $2$-transitive). %%*
In the
case of extended perfect codes the only known nonlinear examples are the $Z_4$-linear
extended perfect codes with transitive permuta\-tion
automor\-phism group \cite{Krotov}.
 Other well-known families of binary  codes with the $k$-transitive permuta\-tion automorphism groups include Reed--Muller codes  for $k=3$ (in particu\-lar extended Hamming codes),
extended BCH codes when $k=2$,
%%%
classes of extended Preparata codes
 for $k=1$.  In the $q$-ary case there are affine-invariant
codes with $2$-transitive permutation automorphism group \cite{C}.
In the paper we are focused on the case of $1$-transitive action,
and
 a code is called {\it coordinate transitive}  if its
permutation automorphism group acts transitively on the set of its coordinate positions.

 Gillespie and Praeger \cite{GP} suggested the following concept.
A code $C$ is called $k$-{\it neighbor transitive} if its
automorphism group acts transitively on $C_i$, for any $i$, $0\leq
i\leq k$, where $C_i$ is the set of words at distance exactly $i$
from $C$. When $k$ equals the covering radius of the code it is
called completely transitive (originally Sol\'e \cite{Sole}
considered only linear codes). In what follows we call a  Steiner
quadruple system with a $1$-point transitive automorphism group
and a $1$-neighbor transitive code briefly  a point transitive SQS
and a neighbor transitive code respectively.

It is easy to see that a binary transitive code containing the
all-zero vector with $k$-transitive permutation automorphism group
is $k$-neighbor transitive, where $k$ is not larger than the covering radius of the code. Moreover in case of minimum distance
at least three a binary code is neighbor transitive  if and only
if the code is transitive and
 coordinate transitive, see \cite{GP3}.
We see that Reed--Muller and  extended BCH codes \cite{MS} are
$k$-neighbor transitive, where $k$ is $3$ and $2$ respectively and
known extended Preparata codes \cite{Kantor}, \cite{HKCSS} are
neighbor transitive. The Nordstrom--Robinson code, which is
comple\-tely transitive, is an exceptional case, see \cite{GP}.
 Note that for any
known extended Preparata code $P$ which is not the
Nordstrom--Robinson code, its automorphism group also acts
transiti\-ve\-ly on $P_4$, see \cite{MSPPI2017}, but not
transiti\-ve\-ly on $P_2$, see \cite{GP}.

 We note that the
Mollard construction \cite{Mollard} applied to any coordinate
transitive (neighbor transitive) extended perfect codes with
trivial function gives a coordinate transitive (neighbor
transitive) extended perfect code. In the previous paper
\cite{MS19} we constructed a class of extended perfect
propeli\-near codes  utilizing regular subgroups of the general
affine group of the binary vector space in the concatenation
construc\-tion \cite{Sol1981}. We also solved the rank and the
kernel problems for these codes. Any code  of length $n$ and
dimension of the kernel $n-2\log_2n$ from this class was shown to
be a non-Mollard, i.e. can not be obtained by the Mollard
construction \cite{Mollard} with arbitrary function.

%%!!%x
 In the current paper we investigate the permutation automorphism groups of the codes from \cite{MS19} and their
 Steiner quadruple systems and find a new  construction for coordinate transitive and neighbor transitive extended perfect
 codes.
 The Steiner quadruple systems of these codes can be described by a particular case of Construc\-tion A*  in the survey of Lindner and Rosa  \cite{LR}.
We prove that the permutation automor\-phism group of the codes
and
 of their Steiner quadruple systems
coincide. Moreover, we prove that the Steiner quadruple systems
are complete invariants for the isomorphism classes of these
codes. As a consequence we are able to find that there are exactly
$64$ isomorphism classes of such codes of length $32$ with the
kernels of dimension $22$. All these codes are non-Mollard.

We obtain an expression for the point transitivity of Steiner
quadruple systems of proposed codes  in terms  similar to extended
affine equivalence for permutations of the vectors of the binary
vector space \cite{CCZ}. For length $n=16$ and $32$ all codes with
the kernels of dimension $n-2\log_2n$ and ranks $n-1$ or $n-2$ %x
 have transitive permutation
automorphism %%
groups. Exploitation of these  codes having relatively small
kernels gives a series of new neighbor transitive extended perfect
codes of any admissible length $n,n \neq 8, 64$ different from
Mollard and $Z_4$-linear codes. We also note that similar properties hold for a class of propelinear Hadamard codes, which suits 
an informal "duality"\, concept.

\section{Preliminaries}
\subsection{Codes }

We  omit here some definitions that can be found in \cite{MS19}.  By $F^n$ we denote the vector space of dimension $n$ over the Galois field $F$ of two
elements  with respect to the Hamming metric.

Let $\pi$ be a permutation of the coordinate positions of vectors
in $F^n$ that acts as follows:
$\pi(y)=(y_{\pi^{-1}(1)},\ldots,y_{\pi^{-1}(n)})$ for $y\in F^n$.
Let $S_n$ be the symmetric group of the set of coordinate
positions $\{1,\ldots,n\}$. Consider the transformation $(x,\pi)$,
where $x\in F^n$, that maps a binary vector $y$ as
 $$(x,\pi)(y)=x+\pi(y).$$
 The {\it composition} of two
transformations $(x,\pi)$ and $(y,\pi')$ is defined as  follows:
$$(x,\pi)\cdot(y,\pi')=(x+\pi(y),\pi\circ\pi'),$$
where $\pi \circ \pi'$ is the composition of permutations $\pi$
and $\pi'$ acting as
 $$ \pi\circ\pi'(i)=\pi(\pi'(i))$$
 for any $i\in \{1,\ldots,n\}.$  The  {\it automorphism group} $\Aut(F^n)$ of $F^n$ is defined as the
group of all such transformations $(x,\pi)$ with respect to the
composition.  Codes $C$ and $D$ are {\it isomorphic} if there are
$x\in F^n$ and $\pi\in S_n$ such that $x+\pi(C)=D$. We write it as
follows: $C\sim_{(x,\pi)}D$ and $C\sim_{\pi}D$ if $x$ is the
all-zero vector. {\it The automorphism group} $\Aut(C)$ of a code
$C$ is the setwise stabilizer of $C$ in $\Aut(F^n)$. A code $C$ is
{\it transitive} ({\it  propelinear}) if there is a subgroup of
$\Aut(C)$ acting transitively (regularly) on the codewords of $C$.

%One might define a {\it completely propelinear} code $C$ to be the
%code with distance partition with respect to $C$ $C,C_1,\ldots,
%C_k$ such that for any $i\in\{0,\ldots,k\}$ there is a subgroup of
%$\Aut(C)$ that acts regularly on $C_i$.
The Hamming code is an example when the code and its complement
\cite{MSPPI2017} are propelinear codes. For the extended
Nordstrom--Robinson code we have the following result obtained by
a computer (the results for $N$ and $N_4$ are known
\cite{MSPPI2017}).

\begin{proposition}
Let $N$ be the extended Nordstrom--Robinson code, $N=N_0$, $N_1$,
$N_2,$ $N_3$ and $N_4$ be the distance partition with respect to
$N$. For any $i \in\{0,1,3,4\}$ there is a subgroup of $\Aut(N)$
that acts regularly on $N_i$ and there is no such subgroup for
$N_2$.
\end{proposition}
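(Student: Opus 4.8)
The plan is to translate the statement into a question about complements of point stabilizers inside $G := \Aut(N)$, and then treat the five values of $i$ separately. Recall first that the extended Nordstrom--Robinson code is completely transitive \cite{GP}, so $G$ acts transitively on each cell $N_i$ of the distance partition; hence for a fixed $\omega \in N_i$ with stabilizer $G_\omega$ one has $|N_i| = [G:G_\omega]$. A subgroup $H \le G$ acts regularly on $N_i$ if and only if $|H| = |N_i|$ and $H \cap G_\omega = \{1\}$, and since then $|H|\,|G_\omega| = |G|$ this is exactly the condition that $H$ be a complement to $G_\omega$ in $G$. Thus the whole proposition is the assertion that $G_\omega$ admits a complement for $\omega \in N_i$, $i \in \{0,1,3,4\}$, but not for $\omega \in N_2$.

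First I would record the arithmetic, computing the sizes $|N_0| = 2^8$, $|N_1| = 2^{12}$, $|N_2| = 2^{11}\cdot 3\cdot 5$, $|N_3| = 2^{12}\cdot 7$, $|N_4| = 2^{8}\cdot 7$ (their sum is $2^{16}$), together with the order and Sylow structure of $G$. Note that $|N_2|$ is distinguished among the five cells: it is the only one whose cardinality is divisible by both $3$ and $5$, and its $2$-part $2^{11}$ is strictly smaller than the $2$-part of the remaining cells. This is the arithmetic reason to expect $N_2$ to behave differently.

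For the existence cases the work is to exhibit the complements. The cases $i = 0$ and $i = 4$ are already known, since $N$ is propelinear and the regular action on $N_4$ is recorded in \cite{MSPPI2017}. For $i = 1$ one looks for a regular subgroup of order $2^{12}$, i.e. a $2$-subgroup of $G$ meeting $G_\omega$ trivially; for $i = 3$ one extends a suitable subgroup of order $2^{12}$ by an element of order $7$ normalizing it, producing a subgroup of order $2^{12}\cdot 7$ transitive on $N_3$. In practice these subgroups are produced by a direct search inside $G$, and once a candidate $H$ is found its regularity is verified simply by checking $|H| = |N_i|$ and that the orbit of $\omega$ has full size.

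The hard part is the negative statement for $N_2$: one must show that $G_\omega$ has no complement of order $2^{11}\cdot 3\cdot 5$. The natural approach is a Sylow/Hall analysis. Since $|N_2|$ carries the full $3$- and $5$-parts of $|G|$ but only $2^{11}$ of the $2$-part, any hypothetical complement $H$ would contain a Sylow $3$-subgroup and a Sylow $5$-subgroup of $G$, while its Sylow $2$-subgroup would have index $2$ in that of $G$; one then tries to derive a contradiction by tracking how the normalizers of these Sylow subgroups sit relative to $G_\omega$, or by a transfer argument forcing $H \cap G_\omega \neq \{1\}$. The main obstacle is that such an argument is delicate and does not reduce to a one-line numerical divisibility obstruction, so the clean and rigorous way to settle non-existence is an exhaustive verification: enumerate the subgroups of $G$ of order $30720$ up to conjugacy and check that every one of them meets the point stabilizer nontrivially. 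This exhaustive check --- infeasible by hand but routine for a computer --- is what establishes the exceptional behaviour of $N_2$ and hence the proposition.
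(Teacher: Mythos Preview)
Your proposal is correct and, at bottom, matches the paper's approach: the paper provides no argument at all beyond the sentence ``obtained by a computer'' (noting only that the cases $i=0$ and $i=4$ were already known from \cite{MSPPI2017}), so your reduction to complements of point stabilizers and the arithmetic of the $|N_i|$ is already more than the paper offers. Your treatment still ends in a computer search for the existence cases $i=1,3$ and an exhaustive enumeration for the non-existence case $i=2$, which is exactly what the paper did; the added value of your write-up is the clean reformulation (regular action on $N_i$ $\Leftrightarrow$ complement to $G_\omega$) and the heuristic explanation of why $N_2$ is the exceptional cell.
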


Let us consider one simple property of transitive codes which will
be useful for our further investigations.

\begin{proposition}\label{Prop1}
Let $C$ and $D$ be  transitive codes of length $n$ containing the
all-zero vector such that $C\sim_{(x,\pi)} D$. Then $C\sim_{\pi'}
D$ for some permutation $\pi'$ of $S_n$.
\end{proposition}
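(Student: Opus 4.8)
The plan is to exploit the fact that the translation part of the given isomorphism is forced to lie in $D$, and then to absorb it using an automorphism of $D$ supplied by transitivity. First, since $\zero\in C$, applying the isomorphism to $\zero$ gives $(x,\pi)(\zero)=x+\pi(\zero)=x$, and as $(x,\pi)(C)=D$ this shows $x\in D$. Thus both $\zero$ and $x$ are codewords of $D$.

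Next I would invoke the transitivity of $D$. By hypothesis there is a subgroup of $\Aut(D)$ acting transitively on the codewords of $D$, so there is an automorphism $g\in\Aut(D)$ with $g(\zero)=x$. Writing $g=(y,\sigma)$ and evaluating at $\zero$ gives $g(\zero)=y+\sigma(\zero)=y=x$; hence $g=(x,\sigma)$ for some $\sigma\in S_n$. The essential point is that the image of $\zero$ under $(y,\sigma)$ determines the translation part, so it is forced to equal $x$, exactly matching the translation vector of the given isomorphism.

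Now I would cancel the translation using the group law on $\Aut(F^n)$. From the composition rule the inverse of $(x,\sigma)$ is $(\sigma^{-1}(x),\sigma^{-1})$, and composing it with $(x,\pi)$ yields
\[
(x,\sigma)^{-1}\cdot(x,\pi)=(\sigma^{-1}(x)+\sigma^{-1}(x),\,\sigma^{-1}\circ\pi)=(\zero,\,\sigma^{-1}\circ\pi),
\]
since $\sigma^{-1}(x)+\sigma^{-1}(x)=\zero$ over $F$. Setting $\pi'=\sigma^{-1}\circ\pi$, and using that $g=(x,\sigma)\in\Aut(D)$ stabilizes $D$ so that $(x,\sigma)^{-1}(D)=D$, I obtain
$\pi'(C)=(x,\sigma)^{-1}\bigl((x,\pi)(C)\bigr)=(x,\sigma)^{-1}(D)=D$,
which is precisely $C\sim_{\pi'}D$.

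I do not anticipate a genuine obstacle: the argument is a short computation in the group $\Aut(F^n)$, and it in fact uses only the transitivity of $D$, the symmetric hypothesis on $C$ being unnecessary. The only care needed is the bookkeeping of the composition and inversion in $\Aut(F^n)$, in particular the observation that $(y,\sigma)(\zero)=y$, which pins down the translation part and lets the two translations cancel exactly.
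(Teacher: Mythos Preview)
Your proof is correct and follows essentially the same route as the paper's: both observe that $x\in D$ because $\zero\in C$, then use transitivity of $D$ to produce an automorphism $(x,\sigma)\in\Aut(D)$ and compose it away, yielding a pure permutation sending $C$ to $D$. The paper phrases the last step as $\pi(C)=x+D=\sigma(D)$ and hence $C=\pi^{-1}\sigma(D)$, while you carry out the equivalent group computation $(x,\sigma)^{-1}\cdot(x,\pi)=(\zero,\sigma^{-1}\pi)$ explicitly; your remark that only the transitivity of $D$ is actually used is also accurate and matches the paper's argument.
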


\begin{proof}
Let $C\sim_{(x,\pi)} D$, i.e. we have $\pi(C)=x+D$. Since $C$ and
$D$ contain all-zero vectors, the latter equality implies that $x$
is in $D$. Taking into account that $D$ is transitive we have
$x+D=\pi^{\prime\prime}(D)$ for some permutation
$\pi^{\prime\prime}\in S_n$. Hence we have
$C=\pi^{-1}\pi^{\prime\prime}(D)$.
\end{proof}

Let the coordinates of the vector space $F^{2^r}$ be indexed by
the vectors of $F^r$. The all-zero vector is denoted by ${\bf 0}$
and its length will always be clear from the context.
 Define an extended Hamming code of length $2^r$ as follows:
 $${\mathcal H}=\{x\in F^{2^r}:\sum_{a: x_a=1} a={\bf 0},\, \wt(x)\equiv 0(\mbox{mod }2)\}.$$

The concatenation of two vectors $x\in F^{r'}$ and $y\in F^{r''}$
is denoted by $x|y$. For codes $C$ and $D$  by $C\times D$ we
denote the code $\{x|y:x\in C,\, y \in D\}$. Let $\pi'$, $\pi''$
be permutations on the vectors of $F^{r'}$ and $F^{r''}$
respectively. By $\pi'|\pi''$ we denote the permutation  on the
vectors of $F^{r'+r''}$ acting on the concatenations $x|y$ of the
vectors $x\in F^{r'}$ and $y\in F^{r^{\prime\prime}}$ as follows:
$(\pi'|\pi'')\,(x|y)=\pi'(x)|\pi''(y)$. In particular, if %z
$\pi'$ and
$\pi''$ are permutations of the coordinate positions of $F^{r'}$
and $F^{r''},$
 then $(\pi'|\pi'')$ is a permutation of
the coordinate positions of   $F^{r'+r''}$.

%In particular, let  $\pi'$ and $\pi''$ be permutations on the
%coordinate positions of $F^{r'}$. A permutation on the coordinates
%of the vector space naturally induces the permutation on the set
%of vectors. So throughout  Section 2, we use the same notation
%$\pi'|\pi''$ for the permutation of the coordinate positions of
%$F^{2r'}$ that acts as follows:
%$(\pi'|\pi'')\,(x|y)=\pi'(x)|\pi''(y)$, for any $x,y\in F^{r'}$.

 Let $e_{ a}$ be the vector in $F^{2^r}$ with the
only one nonzero position indexed by a vector $a\in F^r$.

Consider the following  particular case of the concatenation
construction  for extended perfect codes \cite{Sol1981}:
\begin{equation}\label{Soloveva_codes}
S_{\tau}=\bigcup_{ a\in F^r} ({\mathcal H}+e_{ a}+e_{\bf 0})\times
({\mathcal H}+e_{\tau(a)}+e_{\tau({\bf 0})}),\end{equation} where
$\tau$ is a permutation of the vectors of $F^{r}$. In throughout
of what follows we suppose that $\tau$ fixes ${\bf 0}$, so the
code $S_{\tau}$ contains the all-zero vector. Note that in
\cite{MS19} the code %%*
$S_{\tau}$ is denoted by $S_{{\mathcal H},\tau}$.

Denote the {\it general linear group} that consists of the
nonsingular $r\times r$ matrices  over  $F$  by  $\GL(r,2)$.
 Consider an
affine transformation $( a,M)$, $a\in F^r, M \in
\GL(r,2)$. Its action on $F^r$ is defined as
\begin{equation}\label{actionGA}(a, M)(b) = a + Mb,\end{equation} $b \in F^r$.  The {\it general affine
group} of the space $F^r$ whose elements are $\{(a,M):a \in F^r,
M\in \GL(r,2)\}$ with respect to the composition is denoted by
$\GA(r,2)$. The group $\GA(r,2)$ naturally acts on the positions
indexed by the vectors of $F^r$. Let $\sigma_{a,M}$ denote the
permutation on the positions of $F^{2^r}$ that corresponds to the
affine transformation $(a,M)$. It is well-known that the
automorphism group of an extended Hamming code is isomorphic to
the general affine group, i.e.
$$\Aut({\mathcal H})=\{\sigma_{a,M}:(a,M)\in \GA(r,2)\}.$$
For $a \in F^r$, $M \in \GL(r,2)$ we denote
%translation $\sigma_{a,E}$ (here E is the identity matrix) by $\sigma_a$ and
the linear map $\sigma_{{\bf 0},M}$ by $\sigma_M$ and denote $C
\sim_M D$ if $C\sim _{\sigma_M}D$. When $M$ is the identity matrix
we denote the translation $\sigma_{a,M}$ by $\sigma_{a}$. We have
the following result.
\begin{proposition}\label{prop_affine} Let $\tau$ be a permutation on the vectors of $F^r$, $\tau({\bf 0})={\bf 0}$.
 The code $S_{\tau}$ is an extended Hamming code if and only if $\tau$ is $\sigma_M$ for some $M\in \GL(r,2)$.
\end{proposition}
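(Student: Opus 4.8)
The plan is to reduce the statement to the additivity of $\tau$. Since $S_\tau$ is an extended perfect code by the concatenation construction \eqref{Soloveva_codes}, and a binary linear extended perfect code is necessarily an extended Hamming code (a linear perfect code being a Hamming code), the assertion that $S_\tau$ is an extended Hamming code is equivalent to $S_\tau$ being $F$-linear. Hence it suffices to show that $S_\tau$ is linear if and only if $\tau=\sigma_M$ for some $M\in\GL(r,2)$.

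The key device is a syndrome description of the cosets of ${\mathcal H}$. For $x\in F^{2^r}$ set $\phi(x)=\sum_{b:\,x_b=1}b\in F^r$; this map is $F$-linear, satisfies $\phi(e_a)=a$, and is onto $F^r$ already on the even-weight vectors. Straight from the definition of ${\mathcal H}$ one gets $x\in{\mathcal H}+e_a+e_{\bf 0}$ if and only if $\wt(x)$ is even and $\phi(x)=a$. Consequently a concatenation $u\,|\,w$ with $u,w\in F^{2^r}$ lies in $S_\tau$ if and only if $\wt(u)$ and $\wt(w)$ are even and $\phi(w)=\tau(\phi(u))$.

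With this criterion both implications come out together. The code $S_\tau$ is linear exactly when it is closed under addition, i.e. when $(u+u')\,|\,(w+w')\in S_\tau$ for all $u|w,\,u'|w'\in S_\tau$. As weight parity is additive and $\phi$ is $F$-linear, such a sum has even-weight blocks and obeys $\phi(u+u')=\phi(u)+\phi(u')$ and $\phi(w+w')=\phi(w)+\phi(w')$; hence its membership is equivalent to $\tau(\phi(u)+\phi(u'))=\tau(\phi(u))+\tau(\phi(u'))$. Since $\phi(u)$ and $\phi(u')$ run over all of $F^r$ as $u,u'$ vary over even-weight vectors, $S_\tau$ is linear if and only if $\tau(s+s')=\tau(s)+\tau(s')$ for all $s,s'\in F^r$. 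An additive self-bijection of $F^r$ fixing ${\bf 0}$ is $F$-linear and invertible, i.e. of the form $\sigma_M$ with $M\in\GL(r,2)$, which completes the reduction.

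I do not anticipate a serious obstacle: the proof is essentially the bookkeeping of the linear syndrome $\phi$ and of weight parity, and everything is driven by the single membership identity $\phi(w)=\tau(\phi(u))$. The points deserving care are the verification of this identity — in particular that the overall parity constraint is respected — and the standard reductions invoked at the start, namely that $S_\tau$ is extended perfect, that a linear extended perfect code is an extended Hamming code, and that an additive bijection of $F^r$ fixing the origin is given by an invertible matrix over $F$.
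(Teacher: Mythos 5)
Your proof is correct. Note that the paper states Proposition~\ref{prop_affine} without any proof at all, so there is no argument of the authors' to compare against; your write-up actually supplies the missing justification. The route you take is the natural one: the syndrome description $x\in{\mathcal H}+e_a+e_{\bf 0}\iff \wt(x)\equiv 0\ (\mathrm{mod}\ 2)$ and $\sum_{b:\,x_b=1}b=a$ follows immediately from the definition of ${\mathcal H}$ (using $\tau({\bf 0})={\bf 0}$ so that $e_{\tau({\bf 0})}=e_{\bf 0}$), the membership criterion $\phi(w)=\tau(\phi(u))$ for $u|w\in S_\tau$ is then exact, and surjectivity of $\phi$ on even-weight vectors makes the equivalence ``$S_\tau$ closed under addition $\iff$ $\tau$ additive'' airtight. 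The two standard facts you lean on at the ends of the argument --- that $S_\tau$ is extended perfect (the concatenation construction of \cite{Sol1981}, used throughout the paper) and that a linear (extended) perfect binary code is an (extended) Hamming code, consistent with the paper's remark that Hamming codes of fixed length are unique up to permutation --- are legitimately citable and correctly deployed. No gaps.
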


\subsection{Steiner quadruple systems}

A {\it Steiner quadruple system of order $n$} (briefly SQS if we
know the order by the context) is a set of quadruples (subsets of
size $4$) of a point set of size $n$ where every three points from
the point set are contained in exactly one quadruple.
%It is denoted by SQS$(n)$ or SQS if we know the order by the context.
It is well-known that the supports of
the codewords of weight $4$ of any extended perfect code containing
the all-zero vector form a Steiner quadruple system.

The {\it automorphism group} of a Steiner quadruple system $Q$ of
order $n$, denoted by $\Aut(Q)$ is the setwise stabilizer of the
quadruples of $Q$ in the symmetric group of the pointset of $Q$.
Steiner quadruple systems $Q$ and $Q'$ are {\it isomorphic} if
there is a bijection $\pi$ between their point sets that sends the
quadruples of $Q$ to these of $Q'$. In this case we write
$Q\sim_{\pi} Q'$. A Steiner quadruple system  is called {\it point
transitive} if its permutation automorphism group acts
transitively on the set of its points.

By an {\it affine Steiner quadruple system}, briefly {\it affine
SQS}, we mean the Steiner quadruple system of an extended Hamming
code.  It is well-known that the Hamming codes of fixed length are
unique up to a permutation  and therefore any extended Hamming
code is spanned by its codewords of weight 4 by Glagolev theorem. We conclude that all
affine Steiner quadruple systems are isomorphic.

We now describe the Steiner quadruple system of the code
$S_{\tau}$ (see (\ref{Soloveva_codes})), which we denote by
$SQS_{\tau}$. For any \ $x,y \in F^{2^r}$ the support of \, $x|y
\in F^{2^{r+1}}$ \, is
denoted by the ordered pair $(\supp(x),\supp(y))$.  % and $n$ denotes $2^r-1$.
  The pointset of $SQS_{\tau}$ consists of  $(\{a\},\emptyset),\, (\emptyset,\{a\})$, for all $a$ in $F^r$ and
  $SQS_{\tau}$ is $Q_0\cup Q_1 \cup Q_{\tau}$, where
$$Q_0=\{(\{a,b,c,d\},\emptyset): a,b,c,d \mbox{ are  pairwise distinct vectors of } F^r,  a+b+c+d={\bf 0}\},$$
$$Q_1=\{(\emptyset, \{a,b,c,d\}): a,b,c,d \mbox{ are  pairwise distinct vectors of } F^r, a+b+c+d={\bf 0}\},$$
$$Q_\tau=\{(\{a,c\},\{b,d\}):a,b,c,d \in F^r, \tau(a+c)=b+d \neq {\bf 0}  \}.$$
For a pair of distinct vectors $a, b\in F^r$ we define the
following sets of quadruples:
$$Q_0(a,b)=\{ (\{a,b,c,d\},\emptyset): c, d \in F^r,  c\neq d, a+b+c+d={\bf 0}\}$$
$$\cup  \{ (\{a,b\},\{c,d\}): c,d \in F^r, \tau(a+b)=c+d\neq {\bf 0}\},$$
$$Q_1(a,b)=\{ (\emptyset,\{a,b,c,d\}): c,d  \in F^r,  c\neq d, a+b+c+d={\bf 0}\},$$
$$\cup  \{ (\{c,d\},\{a,b\}): c, d \in F^r, \tau(c+d)=a+b \neq {\bf 0}\},$$
and for a pair of vectors $a, b\in F^r$ we define
$$Q_\tau(a,b)=\{ (\{a,c\},\{b,d\}): c, d \in F^r, \tau(a+c)=b+d\neq {\bf 0}\}.$$

In other words, $Q_0(a,b)$, $Q_1(a,b)$ and $Q_{\tau}(a,b)$ are the
sets of all quadruples in $SQS_\tau$, that contain the pairs of
points $(\{a\},\emptyset)$ and $(\{b\},\emptyset)$;
$(\emptyset,\{a\})$ and $(\emptyset,\{b\})$;  $(\{a\},\emptyset)$
and $(\emptyset,\{b\})$ respectively.

%\medskip

\subsection{Coordinate transitive Mollard extended perfect codes}

 Let $C$ and $D$ be exten\-ded perfect codes of lengths $t$ and $m$ respectively, $\phi$ be a function from $C$ to  $F^m$.
The coordinates of the Mollard extended perfect code are pairs
$(r,s)$, where $r\in \{1,\ldots,t\}$ and $s\in \{1,\ldots,m\}$.
For a vector
$z=(z_{11},\ldots,z_{1m},\ldots,z_{t1},\ldots,z_{tm})$ in $F^{tm}$
we consider
$$p_1(z)=(\sum_{s=1}
^{m}z_{1s},\ldots,\sum_{s=1}^{m}z_{ts}), \,\,\,\,\,
p_2(z)=(\sum_{r=1}^{ t}z_{r1},\ldots,\sum_{r=1}^{t}z_{rm}).$$
 The Mollard extended perfect code is defined as follows: $$M(C,D)=\{z\in  F^{tm}: p_1(z)\in C, \,\,\,\,  p_2(z)\in
 \phi(p_1(z))+D\}.$$ It is not hard to see that $M(C,D)$ is the extension of the  Mollard perfect code by overall parity check (the original construction \cite{Mollard} used perfect codes).
Analogously to the previous works \cite{Stransitive}, \cite{MSM},
\cite{BMRS} we now consider embeddings of the permutation
automorphism groups of $C$ and $D$ into that of the Mollard code
with the all-zero function $\phi$. For  permutations $\pi\in
\PAut(C)$ and $\pi'\in \PAut(D)$ we define $Dub_1(\pi)$ and
$Dub_2(\pi')$ acting on the positions of $M(C,D)$ as follows:
$$Dub_1(\pi)(r,s)=(\pi(r),s), \,\,\,\, Dub_2(\pi')(r,s)=(r,\pi'(s)).$$
It is not hard to see that $Dub_1(\pi)$ and $Dub_2(\pi')$ are
permutation automorphisms of $M(C,D)$, see \cite{Stransitive},
\cite{MSM}, \cite{BMRS}.
\begin{proposition}
If $C$ and $D$ are coordinate transitive (neighbor transitive)
exten\-ded perfect codes then $M(C,D)$ with the all-zero function
$\phi$ is a  coordinate transitive (neighbor transitive) extended
perfect code.
\end{proposition}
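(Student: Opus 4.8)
The plan is to handle both assertions through the criterion of \cite{GP3}: a binary code of minimum distance at least three is neighbor transitive if and only if it is simultaneously transitive and coordinate transitive. Since $M(C,D)$ is an extended perfect code it has minimum distance four, so for the coordinate transitive assertion it suffices to prove coordinate transitivity of $M(C,D)$, and for the neighbor transitive assertion it suffices to prove in addition that $M(C,D)$ is transitive, using that a neighbor transitive $C$ and $D$ are, by the same criterion, both transitive and coordinate transitive.

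For coordinate transitivity I would argue directly with the embeddings $Dub_1$ and $Dub_2$. Given two positions $(r_1,s_1)$ and $(r_2,s_2)$ of $M(C,D)$, coordinate transitivity of $C$ supplies $\pi\in\PAut(C)$ with $\pi(r_1)=r_2$ and coordinate transitivity of $D$ supplies $\pi'\in\PAut(D)$ with $\pi'(s_1)=s_2$. Then $Dub_1(\pi)$ carries $(r_1,s_1)$ to $(r_2,s_1)$ and $Dub_2(\pi')$ carries $(r_2,s_1)$ to $(r_2,s_2)$, so their composition is a permutation automorphism of $M(C,D)$ sending $(r_1,s_1)$ to $(r_2,s_2)$; hence $\PAut(M(C,D))$ is transitive on positions.

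The substantive point is transitivity of $M(C,D)$, and here the essential observation is that for $\phi\equiv\mathbf 0$ one has $M(C,D)=\{z\in F^{tm}: p_1(z)\in C,\ p_2(z)\in D\}$, i.e. $M(C,D)$ is the full preimage under the linear map $(p_1,p_2)$ of the codewords of $C$ and $D$. I would produce two families of automorphisms. First, for any $(u,\pi)\in\Aut(C)$ and $(v,\rho)\in\Aut(D)$ let $\sigma$ be the position permutation $\sigma(r,s)=(\pi(r),\rho(s))$; a short computation gives $p_1(\sigma(z))=\pi(p_1(z))$ and $p_2(\sigma(z))=\rho(p_2(z))$. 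As $u\in C$ and $v\in D$ have even weight, the single linear relation $\sum_r u_r=\sum_s v_s$ that cuts out the image of $(p_1,p_2)$ holds, so one may pick $w\in F^{tm}$ with $p_1(w)=u$ and $p_2(w)=v$; then $(w,\sigma)$ sends any $z\in M(C,D)$ to a vector whose projections lie in $u+\pi(C)=C$ and $v+\rho(D)=D$, so $(w,\sigma)\in\Aut(M(C,D))$ and it sends $\mathbf 0$ to $w$. Second, every $k\in\Ker(p_1,p_2)$ lies in $M(C,D)$ and the translation $(k,\mathrm{id})$ preserves $M(C,D)$, since adding $k$ changes neither $p_1(z)$ nor $p_2(z)$; these translations act transitively on each fiber of $(p_1,p_2)$.

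Transitivity follows by combining the families. Given $z\in M(C,D)$, put $u=p_1(z)\in C$ and $v=p_2(z)\in D$; transitivity of $C$ and $D$ furnishes automorphisms mapping $\mathbf 0$ to $u$ and to $v$ (their translation parts are then exactly $u$ and $v$, since the permutation parts fix $\mathbf 0$), whose lift $(w,\sigma)$ sends $\mathbf 0$ into the fiber of $(p_1,p_2)$ over $(u,v)$, and a suitable kernel translation then moves $w$ to $z$. Thus the subgroup generated by all such lifts and kernel translations carries $\mathbf 0$ onto every codeword, so $M(C,D)$ is transitive, and together with coordinate transitivity this yields neighbor transitivity via \cite{GP3}. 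I expect the main obstacle to be exactly this fibered structure: the lifts only control the pair of projections $(p_1(z),p_2(z))$, so the crux is to recognize that the kernel translations are genuine automorphisms moving freely inside each fiber and that these two families together already act transitively.
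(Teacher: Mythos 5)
Your argument is correct, and for the coordinate-transitivity half it is exactly the paper's: the authors also observe that the group generated by $\{Dub_1(\pi):\pi\in\PAut(C)\}$ and $\{Dub_2(\pi'):\pi'\in\PAut(D)\}$ moves any position $(r_1,s_1)$ to any $(r_2,s_2)$, and they likewise close the neighbor-transitive case by combining coordinate transitivity with transitivity of $M(C,D)$. The one place you diverge is the transitivity of $M(C,D)$ itself: the paper simply cites Solov'eva's earlier result (\cite{Stransitive}) that the Mollard construction applied to transitive codes yields a transitive code, whereas you prove it from scratch by viewing $M(C,D)$ (for $\phi\equiv\mathbf 0$) as the full preimage of $C\times D$ under the linear map $(p_1,p_2)$, lifting automorphisms $(u,\pi)\in\Aut(C)$, $(v,\rho)\in\Aut(D)$ to $(w,\sigma)$ with $p_1(w)=u$, $p_2(w)=v$ (possible because codewords of extended perfect codes have even weight, so $(u,v)$ satisfies the one relation cutting out the image of $(p_1,p_2)$), and then using the kernel translations to move within fibers. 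That self-contained argument is sound --- the lift computation $p_1(\sigma(z))=\pi(p_1(z))$, $p_2(\sigma(z))=\rho(p_2(z))$ is right, and the two families together do carry $\mathbf 0$ to an arbitrary codeword --- so what your version buys is independence from the external reference, at the cost of a longer proof; the paper's version buys brevity. (Both versions implicitly assume $C$ and $D$ contain the all-zero vector, as the paper does throughout.)
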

\begin{proof}
If $\PAut(C)$ and $\PAut(D)$ act transitively on the coordinates of $C$ and $D$, then the group generated %z
 by $\{Dub_1(\pi):\pi \in \PAut(C)\}$ and $\{Dub_2(\pi'):\pi' \in \PAut(D)\}$ acts transitively on the coordinates of $M(C,D)$.
 Moreover, if $C$ and $D$ are transitive then $M(C,D)$ is transitive, see \cite{Stransitive}.
 We conclude that $\Aut(M(C,D))$ acts transitively on the codewords of $M(C,D)$ and its coordinates, i.e. $M(C,D)$ is neighbor transitive code.
\end{proof}

We say that a code is {\it non-Mollard} if it is not isomorphic to
a Mollard code with arbitrary $\phi$. In Section 4 we obtain an
infinite series of  neighbor transitive extended perfect
non-Mollard codes.

\section{The automorphism groups of a class of Steiner quadruple systems}

%In this section we  show that the automorphism group of the SQS of
%any  code $S_{\tau}$ coincides with the permutation automorphism
%group of the code. We prove that the isomorphism classes of such
%SQS's are complete invariants for the isomorphism classes of the
%transitive codes in this class. We also obtain a criterion for the
%transitivity of the automorphism group of SQS of proposed codes in
%terms of EA-type equivalen\-ce for permuta\-tions of $F^r$.

%Let the coordinate positions of the vector space $F^{2^r}$ be indexed by
%the vectors from $F^r$ and the position $a\in \{1,\ldots,2^r\}$ corresponds to the vector of $F^{r}$ which we denote by $\bf{a}$.

%${\bf 0} \in \pi^{-1}(x) + C$ and therefore $\pi^{-1}(x)\in C$. From this fact and due to
%the transitivity of $C$ we have
%\begin{equation}\label{equa2}\pi^{-1}(x) + C=\pi^{\prime\prime}(C)\end{equation}
%for some permutation $\pi^{\prime\prime}$ of coordinate positions
%of $F^{n}$. Substituting (\ref{equa2}) into (\ref{equa1}) we
%obtain $$\pi\pi^{\prime\prime}(C)=D.$$ Hence $C$ is isomorphic to $D$ by the
%permutation $\pi'=\pi\pi^{\prime\prime}$ and $\pi'$ gives us an
%equivalence of $SQS(C)$ to $SQS(D)$: $$\pi'(SQS(C))= SQS(D).$$ \end{proof}

The following lemma reveals the discrepancy in the "linearity" \,
of the quadruples from $Q_0(a,b)$ and $Q_1(a,b)$ and the nonlinearity of
that of $Q_\tau(a,b)$.

\begin{lemma}\label{L1}
Let $\tau$ be a  permutation on the vectors of $F^r$ that fixes
${\bf 0}$ and $SQS_\tau$ be   non-affine. Then

1. For any  $a, b\in F^r$, $a\neq b$ the symmetric difference of
any pair of distinct quadruples from $Q_0(a,b)$  $(Q_1(a,b))$  is
in $SQS_\tau$.

2. For any  $a, b\in F^r$ there are distinct quadruples in
$Q_\tau(a,b)$ whose symmetric difference is not in $SQS_{\tau}$.

\end{lemma}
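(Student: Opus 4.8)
The plan is to work throughout with the two families of points, writing $L_a=(\{a\},\emptyset)$ and $R_a=(\emptyset,\{a\})$ for $a\in F^r$, and to use that a quadruple of $SQS_\tau$ has exactly one of three shapes: four left points $\{L_a,L_b,L_c,L_d\}$ with $a+b+c+d={\bf 0}$ (the members of $Q_0$), four right points with the same sum condition (the members of $Q_1$), or a mixed $2{+}2$ set $\{L_a,L_c,R_b,R_d\}$ with $\tau(a+c)=b+d\neq{\bf 0}$ (the members of $Q_\tau$). The computation driving both parts is the same: if two quadruples share a fixed pair of points, that pair cancels in the symmetric difference and only the two ``partner pairs'' survive. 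I would first record the elementary fact that two $2$-element partner pairs with a common coordinatewise sum either coincide or are disjoint, so that such a symmetric difference is genuinely a four-element set.

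For Part 1, I would take distinct $B_1,B_2\in Q_0(a,b)$, both containing $L_a$ and $L_b$; these cancel, and each partner pair is either $\{L_c,L_d\}$ with $c+d=a+b$ (if $B_i$ lies in the $Q_0$ part) or $\{R_c,R_d\}$ with $c+d=\tau(a+b)$ (if $B_i$ lies in the mixed part). In every case the partner sum is determined by $a,b$, so distinct partners are disjoint and $B_1\triangle B_2$ is a four-set. A short case analysis then finishes it: two left-partners give four left points summing to $(a+b)+(a+b)={\bf 0}$, hence a member of $Q_0$; two right-partners likewise give a member of $Q_1$; a left-partner together with a right-partner gives a mixed set $\{L_c,L_d,R_{c'},R_{d'}\}$ for which the required identity $\tau(c+d)=c'+d'$ holds automatically, since $c+d=a+b$ and $c'+d'=\tau(a+b)$, while $\tau(a+b)\neq{\bf 0}$ because $a\neq b$ and $\tau$ fixes only ${\bf 0}$; hence a member of $Q_\tau$. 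The claim for $Q_1(a,b)$ follows by the left–right symmetry of the construction, which merely replaces $\tau$ by $\tau^{-1}$.

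For Part 2, I would parametrise $Q_\tau(a,b)$ by its free left point: for each $c\neq a$ there is exactly one quadruple $B_c=\{L_a,L_c,R_b,R_{d_c}\}$ with $d_c=b+\tau(a+c)$. Choosing $c_1\neq c_2$ (both $\neq a$), the points $L_a,R_b$ cancel and, since $\tau$ is injective, $B_{c_1}\triangle B_{c_2}=\{L_{c_1},L_{c_2},R_{d_{c_1}},R_{d_{c_2}}\}$ is a mixed four-set; being mixed it can belong to $SQS_\tau$ only via $Q_\tau$, which happens precisely when $\tau(c_1+c_2)=d_{c_1}+d_{c_2}=\tau(a+c_1)+\tau(a+c_2)$. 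Substituting $u=a+c_1$, $v=a+c_2$ turns this into $\tau(u+v)=\tau(u)+\tau(v)$, an identity that no longer mentions $a$ or $b$. The crux, and the only real obstacle, is to exhibit a violating pair: since $SQS_\tau$ is non-affine, $\tau$ is not a linear map by Proposition \ref{prop_affine}, so additivity fails for some $u,v$; and because $\tau$ fixes ${\bf 0}$, any failure of additivity automatically has $u\neq{\bf 0}$, $v\neq{\bf 0}$, $u\neq v$ (each degenerate case forces equality). Hence $c_1=a+u$ and $c_2=a+v$ are admissible and distinct, producing for every $a,b$ two quadruples of $Q_\tau(a,b)$ whose symmetric difference escapes $SQS_\tau$, which completes the proof.
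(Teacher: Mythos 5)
Your proof is correct and follows essentially the same route as the paper's: the same three-case analysis (left--left, right--right, left--right partner pairs) for Part 1, and for Part 2 the same reduction of the closure condition to the additivity $\tau(u+v)=\tau(u)+\tau(v)$, refuted via Proposition \ref{prop_affine}. The only difference is that you spell out details the paper leaves as ``easy to see'' (disjointness of partner pairs, and that any failure of additivity necessarily occurs at distinct nonzero $u,v$, so a violating pair of quadruples exists in $Q_\tau(a,b)$ for every $a,b$); note only that what you need and use is that ${\bf 0}$ is the unique preimage of ${\bf 0}$ under the bijection $\tau$, not that ${\bf 0}$ is the only fixed point of $\tau$.
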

\begin{proof}

1. Consider two arbitrary quadruples from $Q_0(a,b)$. We have
three different cases.

\medskip
Case A. Let $(\{a,b,c,d\},\emptyset)$ and
$(\{a,b,e,f\},\emptyset)$ be distinct quadruples of $Q_0$. Then
their symmetric difference is obviously in $Q_0$, because these
are the supports of the codewords of weight four of the extended
Hamming code ${\mathcal H}$ of length $2^r$.

\medskip
Case B. Let $(\{a,b\},\{c,d\})$ and $(\{a,b\},\{e,f\})$ be
distinct quadruples from $Q_0(a,b)$, where $\tau( a+ b)=c+d= e+
f$. Then their symmetric difference $(\emptyset,\{c,d,e,f\})$ is
in $Q_1$, because c+d+e+f={\bf 0}.

\medskip
Case C. Let $(\{a,b,c,d\},\emptyset)$ and $(\{a,b\},\{e,f\})$ be
from $Q_0(a,b)$, where $a+b=c+d$ and $\tau(a+b)=e+f$. Then their
symmetric difference $(\{c,d\},\{e,f\})$ is in $Q_\tau$, because
$\tau(c+d)=\tau(a+b)=e+f$.

Similar considerations hold for any pair of quadruples of
$Q_1(a,b)$.

\medskip
2. Consider any two distinct quadruples from $Q_\tau(a,b)$:
\begin{equation}\label{Quad1}
(\{a,c\},\{b,d\}), (\{a,e\},\{b,f\}).
\end{equation}
By the definition of $Q_{\tau}(a,b)$ we have:
\begin{equation}\label{Quad2}
\tau(a+ c)=b+d, \, \tau(a+e)=b+f.
\end{equation}
%z

The symmetric difference of %%*
the quadruples from (\ref{Quad1}) is
$(\{c,e\},\{d,f\}).$ It belongs to $Q_\tau$ if and only if
$\tau(c+e)=d+f$, which  taking into account (\ref{Quad2}) holds if
and only if $\tau(c+e)=\tau(a+c)+ \tau(a+e)$. It is easy to see
that the latter equality holds for any distinct $c$ and $e$
different from $ a$ if and only if $\tau(c+e)=\tau(c)+\tau(e)$. We
conclude that the symmetric difference of any pair of distinct
quadruples from $Q_\tau(a,b)$ is in $SQS_\tau$ if and only if
$\tau$ is a linear mapping, i.e. $\tau=\sigma_M$ for some $M\in
\GL(r,2)$, so $SQS_{\tau}$ is affine by Proposition
\ref{prop_affine}.
\end{proof}

\begin{proposition}\label{prop1}
Let $\tau$ be a permutation on the vectors of
 $F^r$, $\tau({\bf 0})={\bf 0}$. Then for any $a,b \in F^r$ we have $(\sigma_{a}|\sigma_{b})\in \Aut(SQS_{\tau})$. In particular,
$\Aut(SQS_\tau)$ either acts transitively on the points of
$SQS_{\tau}$ or has two orbits of points that are $(\{a:a\in
F^r\},\emptyset)$ and $(\emptyset,\{a:a\in F^r\})$.

\end{proposition}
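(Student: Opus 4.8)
The plan is to prove part 1 directly, by checking that the permutation $\sigma_a|\sigma_b$ preserves each of the three families $Q_0$, $Q_1$, $Q_\tau$ making up $SQS_{\tau}$, and then to read off part 2 as an immediate orbit argument. First I would record how $\sigma_a|\sigma_b$ moves the points. Since $\sigma_a$ is the translation $c\mapsto a+c$ on the positions indexed by $F^r$, the permutation $\sigma_a|\sigma_b$ sends the point $(\{c\},\emptyset)$ to $(\{a+c\},\emptyset)$ and the point $(\emptyset,\{d\})$ to $(\emptyset,\{b+d\})$.

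Then I would verify the invariance of the three quadruple types. For $Q_0$: a quadruple $(\{c_1,c_2,c_3,c_4\},\emptyset)$ with $c_1+c_2+c_3+c_4={\bf 0}$ is sent to $(\{a+c_1,a+c_2,a+c_3,a+c_4\},\emptyset)$; since $4a={\bf 0}$ over $F$, the four new indices again sum to ${\bf 0}$ and remain pairwise distinct, so the image lies again in $Q_0$, and the same computation settles $Q_1$. The essential case is $Q_\tau$: a quadruple $(\{c,c'\},\{d,d'\})$ satisfies $\tau(c+c')=d+d'\neq {\bf 0}$, and its image $(\{a+c,a+c'\},\{b+d,b+d'\})$ has $(a+c)+(a+c')=c+c'$ and $(b+d)+(b+d')=d+d'$ because $2a=2b={\bf 0}$. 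Hence the defining relation $\tau(c+c')=d+d'$ is literally unchanged and the image lies in $Q_\tau$, which establishes $(\sigma_a|\sigma_b)\in\Aut(SQS_{\tau})$.

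The point worth highlighting is that there is essentially no obstacle here, and the reason is instructive in light of Lemma \ref{L1}: translating the indices of each half leaves both the argument $c+c'$ of $\tau$ and the value $d+d'$ invariant, so no linearity of $\tau$ is needed --- in sharp contrast with the linear maps, which destroyed $Q_\tau$ unless $\tau$ itself was linear. Thus the obstruction to extra automorphisms that appeared in Lemma \ref{L1} simply does not arise for these translations.

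For part 2 I would let $a$ and $b$ range over $F^r$. Varying $a$ shows that the subgroup $\{\sigma_a|\sigma_b:a,b\in F^r\}\le\Aut(SQS_{\tau})$ already acts transitively on the $2^r$ points $(\{a\},\emptyset)$ and, independently, on the $2^r$ points $(\emptyset,\{a\})$. Consequently every orbit of the full group $\Aut(SQS_{\tau})$ is a union of these two sets, leaving only two possibilities: either some automorphism carries a point of the first set to a point of the second, forcing a single orbit and hence a transitive action, or no such automorphism exists, in which case the two sets $(\{a:a\in F^r\},\emptyset)$ and $(\emptyset,\{a:a\in F^r\})$ are exactly the two orbits. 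This is precisely the claimed dichotomy.
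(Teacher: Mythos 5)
Your proposal is correct and follows essentially the same route as the paper: one checks that the translation $\sigma_a|\sigma_b$ preserves $Q_0$ and $Q_1$ (trivially, since the coordinate sums are unchanged) and preserves $Q_\tau$ because $(a+c)+(a+c')=c+c'$ and $(b+d)+(b+d')=d+d'$, and then the dichotomy on orbits follows from transitivity of the translations on each half of the point set. The paper states the $Q_0$, $Q_1$ cases as obvious and only writes out the $Q_\tau$ computation, so your version is simply a more detailed rendering of the same argument.
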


\begin{proof}
Obviously $(\sigma_a|\sigma_b)$ fixes the quadruples from $Q_0$
and $Q_1$. Let $(\{u,v\},\{g,f\})$ be a quadruple of $Q_{\tau}$,
so $\tau(u+v)=g+f$. From the latter equality, we have
that $(\sigma_a|\sigma_b)(\{u,v\},\{g,f\})=(\{u+a,v+a\},\{g+b,f+b\})$
 is in $Q_\tau$.
\end{proof}

 We
denote by $\xi$ the permutation that swaps the points (coordinate positions)
$(\{a\},\emptyset)$ and $(\emptyset, \{a\})$ for any $a\in F^r$. %%*

%\begin{corollary}\label{C1}
%Let $\tau$ be a permutation on the vectors of $F^r$,
%fixing ${\bf 0}$, such that $SQS_\tau$ is non-affine SQS.
%If we have $SQS_\tau\sim_{\pi} SQS_{\tau'}$ then $\pi(Q_0\cup%
%Q_1)$=$Q_0\cup Q_1$. In particular, $\pi=(\sigma_{{\bf
%a},A}|\sigma_{{\bf b},B})\xi^t$, where $({\bf a},A), ({\bf
%b},B)\in \GA(r,2)$,
%$t\in \{0,1\}$.

%\end{corollary}
%\begin{proof}
%\end{proof}
\medskip
\noindent
 {\bf Remark 1.} Let $A$ be a matrix from $GL(r,2)$ and let
$\tau$ be a permutation of the vectors of $F^r$. In below we use
$A\tau$ and $\tau A$ to denote the permutations of the vectors of
$F^r$ that are compositions of the linear mapping $A$ and $\tau$,
i.e $\sigma_A\tau$ and $\tau\sigma_A$ respectively.

\begin{theorem}\label{Theo1}
Let $\tau$ and $\tau'$ be permutations on the vectors of
 $F^r$ that fix ${\bf 0}$. Two $SQS_{\tau}$ and $SQS_{\tau'}$ of order $2^{r+1}$ satisfy  $SQS_{\tau}\sim_\pi SQS_{\tau'}$ if and only if
 one of the following conditions holds:

1) both $SQS_{\tau}$ and $SQS_{\tau'}$ are affine;

2)  the permutation $\pi$ is equal to $(\sigma_{a,A}|\sigma_{
b,B})$, where $( a,A)$, $( b,B)\in \GA(r,2)$ and
$\tau'=B\tau{A}^{-1}$;

3) the permutation $\pi$ is equal to
$(\sigma_{a,A}|\sigma_{b,B})\xi$, where $( a,A),$ $( b,B)\in
\GA(r,2)$ and $\tau'=B\tau^{-1}{A}^{-1}$.\end{theorem}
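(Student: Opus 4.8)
The plan is to prove both directions by exploiting the trichotomy $SQS_\tau=Q_0\cup Q_1\cup Q_\tau$ together with the ``linearity'' criterion of Lemma~\ref{L1}. I write $L$ for the half consisting of the points $(\{a\},\emptyset)$ and $R$ for the half of the points $(\emptyset,\{a\})$, $a\in F^r$. For the \emph{if} direction, case 1) is immediate since all affine Steiner quadruple systems are isomorphic. For cases 2) and 3) I would simply verify that the prescribed $\pi$ carries $SQS_\tau$ onto $SQS_{\tau'}$ by checking each type of quadruple. A map $(\sigma_{a,A}|\sigma_{b,B})$ fixes the two halves, and since $\sigma_{a,A},\sigma_{b,B}\in\Aut(\mathcal H)$ it permutes $Q_0$ among itself and $Q_1$ among itself; for a quadruple $(\{u,v\},\{g,h\})\in Q_\tau$ with $\tau(u+v)=g+h$ its image has left sum $A(u+v)$ (the translation cancels) and right sum $B(g+h)=B\tau(u+v)$, so it lies in $Q_{\tau'}$ exactly when $\tau'(Aw)=B\tau(w)$ for all $w$, i.e. $\tau'=B\tau A^{-1}$. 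Precomposing with $\xi$ interchanges the roles of the two halves (so $Q_0\leftrightarrow Q_1$) and replaces $\tau$ by $\tau^{-1}$ in this computation, yielding the relation $\tau'=B\tau^{-1}A^{-1}$ of case 3).

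For the \emph{only if} direction, suppose $SQS_\tau\sim_\pi SQS_{\tau'}$. The key observation is that the property ``\emph{every} symmetric difference of two distinct quadruples through a point pair $\{p,q\}$ again lies in the system'' is preserved by any isomorphism. By Lemma~\ref{L1} this property holds precisely for the same-side pairs (those inside $L$ or inside $R$) and fails for every cross pair, \emph{provided the system is non-affine}; in the affine case the same property holds for all pairs. Hence the number of pairs failing it is an isomorphism invariant, so an affine system is never isomorphic to a non-affine one. This disposes of the mixed case and shows that, outside case 1), both $SQS_\tau$ and $SQS_{\tau'}$ must be non-affine.

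Assuming now both are non-affine, the preserved property forces $\pi$ to respect the relation ``same side'', which is an equivalence relation with exactly the two classes $L$ and $R$; therefore $\pi$ either fixes both halves or swaps them, the swap being realized by $\xi$. In the fixing case $\pi=\pi_L|\pi_R$, and since $\pi$ sends the quadruples of $SQS_\tau$ contained in $L$ (which are exactly $Q_0$, common to both systems) to the quadruples of $SQS_{\tau'}$ contained in $L$, the restriction $\pi_L$ is an automorphism of the affine SQS; by $\Aut(\mathcal H)=\GA(r,2)$ this gives $\pi_L=\sigma_{a,A}$, and likewise $\pi_R=\sigma_{b,B}$. Re-reading the action of $\pi$ on $Q_\tau$ exactly as in the \emph{if} computation then yields $\tau'(Aw)=B\tau(w)$ for every $w$, i.e. condition 2). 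The swapping case is analogous after writing $\pi=(\sigma_{a,A}|\sigma_{b,B})\xi$ and gives condition 3).

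The step I expect to be the main obstacle is this structural reduction: justifying rigorously that the isomorphism-invariant closure property of Lemma~\ref{L1} isolates exactly the bipartition $\{L,R\}$ and hence constrains $\pi$ to the form $\pi_L|\pi_R$ or $(\pi_L|\pi_R)\xi$. One must check that each cross pair carries at least two quadruples (true since $|Q_\tau(a,b)|=2^r-1\ge 2$ once $\tau$ is nonlinear, which forces $r\ge 2$) so that part 2 of Lemma~\ref{L1} applies, and that the derived identity $\tau'(Aw)=B\tau(w)$ holds for all $w\in F^r$, including by verifying that the sums $u+v$ occurring among admissible $Q_\tau$ quadruples range over all nonzero vectors. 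Once the bipartition is pinned down, identifying the half-maps with elements of $\GA(r,2)$ and reading off the relation for $\tau'$ are routine.
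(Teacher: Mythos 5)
Your proof is correct and takes essentially the same route as the paper's: the same direct verification of cases 2) and 3) for sufficiency, and for necessity the same use of Lemma~\ref{L1} to show that $\pi$ must preserve or swap the two halves, hence satisfies $\pi(Q_0\cup Q_1)=Q_0\cup Q_1$ and has the form $(\sigma_{a,A}|\sigma_{b,B})\xi^t$. The detail you supply at the end (identifying the half-maps via $\Aut(\mathcal H)=\GA(r,2)$ acting on the affine SQS) is precisely the step the paper dismisses as ``easy to see''.
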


\begin{proof}

Sufficiency. Case 1.  It is well known that if both $SQS_{\tau}$ and $SQS_{\tau'}$ are affine, then they are isomorphic.

We can factor out the permutations $(\sigma_a|\sigma_b)$  as they are common automorphisms for $SQS_\tau$ and $SQS_{\tau'}$ by Proposition \ref{prop1}.
 Throughout cases 2 and 3 below we consider
 an arbitrary
quadruple $(\{e,f\},\{c,d\})$ from $Q_{\tau}$ where $e,f\in F^r$
are distinct  and  fulfill
\begin{equation}\label{eqq}\tau(
e+ f)=c+d.\end{equation}

Case 2.
We show that \begin{equation}\label{eq00}(\sigma_A|\sigma_B)(SQS_{\tau})=SQS_{B\tau{A}^{-1}},\end{equation} for any $A,B\in \GL(r,2)$.

Obviously, $(\sigma_A|\sigma_B)$ fixes $Q_0$ and $Q_1$.
 We have
$$(\sigma_A|\sigma_B)(\{e,f\},\{c,d\})=(\{Ae,Af\},\{Bc,Bd\}).$$ From (\ref{eqq}) this vector is in $Q_{B\tau A^{-1}}$ because $B\tau A^{-1}(Ae+Af)=B\tau(e+f)=Bc+Bd$, so (\ref{eq00}) holds.

Case 3. We are to show that \begin{equation}\label{eq01}(\sigma_A|\sigma_B)\xi(SQS_{\tau})=SQS_{B\tau^{-1}{A}^{-1}},\end{equation} for any $A,B\in \GL(r,2)$.

We note that $\xi(SQS_{\tau})=SQS_{\tau^{-1}}$. Indeed, it is
obvious that $\xi(Q_0)=Q_1$. Let $(\{e,f\},\{c,d\})$ be from
$Q_{\tau}$. Then $\xi(\{e,f\},\{c,d\})=(\{c,d\},\{e,f\})$ is in
$SQS_{\tau^{-1}}$ because $\tau^{-1}(c+d)$ is $e+f$ by
(\ref{eqq}).

From $\xi(SQS_{\tau})=SQS_{\tau^{-1}}$ and (\ref{eq00}) we have
that
$$(\sigma_A|\sigma_B)\xi(SQS_{\tau})=(\sigma_A|\sigma_B)(SQS_{\tau^{-1}})=SQS_{B\tau^{-1}{A}^{-1}},$$
i.e.   (\ref{eq01}) holds.

Necessity. It is enough to consider non-affine $SQS_{\tau}$ and
$SQS_{\tau'}$, $SQS_{\tau}\sim_\pi SQS_{\tau'}$. By Lemma \ref{L1}
for any $a,b\in F^r$ the subsets $Q_0(a,b)$ and $Q_1(a,b)$ of
$SQS_{\tau'}$ could not be $\pi(Q_{\tau}(a,b))$. In other words,
$\pi(\{a:a\in F^r\},\emptyset)$ is either $(\{a:a\in
F^r\},\emptyset)$ or $(\emptyset,\{a:a\in F^r\})$, so $\pi(Q_0\cup
Q_1)$=$Q_0\cup Q_1$. It is easy to see that $\pi(Q_0\cup
Q_1)$=$Q_0\cup Q_1$ if and only if   $\pi$ is
$(\sigma_{a,A}|\sigma_{b,B})\xi^t$, where $(a,A), (b,B)\in
\GA(r,2)$, $t\in \{0,1\}$.

\end{proof}

\begin{corollary}\label{CoroSeq}
Let $\tau$ be  the permutation on the vectors of $F^r$
that fixes ${\bf 0}$. %z
 The Steiner quadruple system $SQS_{\tau}$ is point transitive if
and only if $\tau^{-1}\in \GL(r,2)\tau \GL(r,2)$.
\end{corollary}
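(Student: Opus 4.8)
The plan is to reduce point transitivity of $SQS_\tau$ to the existence of a single self-isomorphism that interchanges the two natural halves of the point set, and then to read off the required algebraic condition directly from Theorem \ref{Theo1}. First I would invoke Proposition \ref{prop1}, which already supplies the dichotomy: $\Aut(SQS_\tau)$ is either transitive on points or has exactly the two orbits $H_1=(\{a:a\in F^r\},\emptyset)$ and $H_2=(\emptyset,\{a:a\in F^r\})$. In the two-orbit case every automorphism must preserve each of $H_1,H_2$ setwise (orbits are preserved by the group). Hence $SQS_\tau$ is point transitive if and only if some $\pi\in\Aut(SQS_\tau)$ sends a point of $H_1$ into $H_2$, that is, if and only if there is a ``half-swapping'' automorphism. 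This reduces the whole corollary to detecting such an automorphism.

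For the non-affine case I would apply Theorem \ref{Theo1} with $\tau'=\tau$, so that an automorphism of $SQS_\tau$ is exactly a $\pi$ with $SQS_\tau\sim_\pi SQS_\tau$. Since $SQS_\tau$ is non-affine, case 1 of the theorem is excluded, and $\pi$ must be of type (2), $\pi=(\sigma_{a,A}|\sigma_{b,B})$ with $\tau=B\tau A^{-1}$, or of type (3), $\pi=(\sigma_{a,A}|\sigma_{b,B})\xi$ with $\tau=B\tau^{-1}A^{-1}$. A type (2) automorphism maps each half to itself, whereas a type (3) automorphism carries the factor $\xi$ and therefore interchanges $H_1$ and $H_2$. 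Consequently a half-swapping automorphism exists precisely when a type (3) automorphism exists, i.e. precisely when $\tau=B\tau^{-1}A^{-1}$ for some $A,B\in\GL(r,2)$.

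The affine case I would dispose of separately, since there Theorem \ref{Theo1} does not pin down the form of $\pi$. If $SQS_\tau$ is affine then $\tau=\sigma_M$ for some $M\in\GL(r,2)$ by Proposition \ref{prop_affine}; the system is the SQS of an extended Hamming code and is point transitive (its automorphism group is the full affine group on $2^{r+1}$ points). On the algebraic side $\tau\in\GL(r,2)$ forces $\GL(r,2)\tau\GL(r,2)=\GL(r,2)\ni\tau^{-1}$, so both sides of the claimed equivalence hold simultaneously. Finally, translating the type (3) condition into double-coset form is a one-line inversion: from $\tau=B\tau^{-1}A^{-1}$ one gets $\tau^{-1}=A\tau B^{-1}$, and as $A,B$ range over $\GL(r,2)$ so do $A,B^{-1}$; thus the existence of such $A,B$ is exactly $\tau^{-1}\in\GL(r,2)\tau\GL(r,2)$. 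Combining the three observations gives the corollary.

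I do not expect a serious obstacle here: the genuine content lies entirely in the first two paragraphs, namely recognizing through Proposition \ref{prop1} and Theorem \ref{Theo1} that point transitivity is equivalent to the presence of a type (3) self-isomorphism. The remaining manipulation with double cosets is routine. The main point requiring care is to handle the affine case by hand, because Theorem \ref{Theo1} only asserts isomorphism in that case without describing the isomorphism, and to keep straight that type (3) — and only type (3) — fuses the two point orbits.
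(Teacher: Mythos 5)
Your proof is correct and follows essentially the same route as the paper: reduce point transitivity via Proposition \ref{prop1} to the existence of a half-swapping automorphism, identify these with the type (3) maps $(\sigma_{a,A}|\sigma_{b,B})\xi$ from Theorem \ref{Theo1}, and invert $\tau=B\tau^{-1}A^{-1}$ to get the double-coset condition. Your explicit treatment of the affine case is a small bonus of rigor that the paper's own proof leaves implicit, but it does not change the argument.
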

\begin{proof}
By Proposition \ref{prop1} for any permutation $\tau$ the points
of $(\{a:a\in F^r\},\emptyset)$ and $(\emptyset, \{a:a\in F^r\})$
are in one orbit of $\Aut(SQS_{\tau})$. So, $SQS_{\tau}$ is point
transitive if and only if there is $\pi\in \Aut(SQS_{\tau})$, such
that $\pi(\emptyset, \{a\})=\pi(\{b\},\emptyset )$ for some $a,
b\in F^r$. We have the following description for all such $\pi$
and $\tau$ from Theorem 1:
$$\{(\sigma_{A}|\sigma_{B})\xi: A,B\in GL(r,2) \} \,\, \mbox {and } \, \tau^{-1}\in \GL(r,2)\tau \GL(r,2).$$\end{proof}
\begin{theorem}\label{Theo2}
Let $\tau$ and $\tau'$ be  permutations on the vectors of $F^r$
that fix ${\bf 0}$. For a permutation $\pi$ we have
$SQS_{\tau}\sim_\pi SQS_{\tau'}$  if and only if $S_{\tau}\sim_\pi
S_{\tau'}$.

\end{theorem}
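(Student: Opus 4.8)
The plan is to prove the two implications separately; the implication $S_\tau\sim_\pi S_{\tau'}\Rightarrow SQS_\tau\sim_\pi SQS_{\tau'}$ is immediate, while the converse carries the content. For the easy direction, note that a coordinate permutation $\pi$ preserves Hamming weight and carries the codewords of $S_\tau$ bijectively onto those of $S_{\tau'}=\pi(S_\tau)$; in particular it sends the weight-$4$ codewords of $S_\tau$ onto those of $S_{\tau'}$ and hence their supports onto one another. Since $SQS_\tau$ and $SQS_{\tau'}$ are the sets of supports of the weight-$4$ codewords of $S_\tau$ and $S_{\tau'}$, this yields $\pi(SQS_\tau)=SQS_{\tau'}$.

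For the converse I would first record the code-level analogues of the computations in the sufficiency part of Theorem \ref{Theo1}, valid for every permutation $\tau$ with $\tau({\bf 0})={\bf 0}$ and all $a,b\in F^r$, $A,B\in\GL(r,2)$:
$$(\sigma_a|\sigma_b)(S_\tau)=S_\tau,\qquad (\sigma_A|\sigma_B)(S_\tau)=S_{B\tau A^{-1}},\qquad \xi(S_\tau)=S_{\tau^{-1}}.$$
Each is checked by applying the permutation term-by-term to the defining union (\ref{Soloveva_codes}), using that $\sigma_M\in\Aut({\mathcal H})$, that $\sigma_A(e_u)=e_{Au}$ and $\xi(x|y)=y|x$, and that $\sigma_c$ fixes the coset ${\mathcal H}+e_u+e_{\bf 0}$ since ${\mathcal H}+e_{u+c}+e_c={\mathcal H}+e_u+e_{\bf 0}$; the reindexing $u\mapsto Au$ (respectively $u\mapsto\tau(u)$) then replaces $\tau$ by $B\tau A^{-1}$ (respectively $\tau^{-1}$).

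With these identities established, I would split the converse according to whether the systems are affine. If $SQS_\tau$ and $SQS_{\tau'}$ are affine, then $S_\tau$ and $S_{\tau'}$ are extended Hamming codes by Proposition \ref{prop_affine}; each is linear and spanned by its weight-$4$ codewords by Glagolev's theorem. As $\pi$ acts linearly and maps the characteristic vectors of the quadruples of $SQS_\tau$ onto those of $SQS_{\tau'}$, that is, the weight-$4$ codewords of $S_\tau$ onto those of $S_{\tau'}$, linearity forces $\pi(S_\tau)=S_{\tau'}$. If the systems are non-affine, Theorem \ref{Theo1} gives that $\pi$ equals $(\sigma_{a,A}|\sigma_{b,B})$ with $\tau'=B\tau A^{-1}$, or $(\sigma_{a,A}|\sigma_{b,B})\xi$ with $\tau'=B\tau^{-1}A^{-1}$; writing $\sigma_{a,A}=\sigma_a\sigma_A$ and applying the three identities above in the same order as in the sufficiency proof of Theorem \ref{Theo1} gives $\pi(S_\tau)=S_{\tau'}$ in both cases.

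The main obstacle is the non-affine converse: unlike the affine case, $S_\tau$ is nonlinear and its weight-$4$ codewords cannot be used to reconstruct it by a span argument, so one must know the exact form of $\pi$. This is precisely what Theorem \ref{Theo1} supplies, which is why the affine subcase (where that theorem does not pin down $\pi$) has to be treated separately by the Glagolev span argument. I expect the three code-level identities to be the only genuine computations, and each is short.
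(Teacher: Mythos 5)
Your proof is correct, but the hard direction follows a genuinely different route from the paper's. The paper's necessity argument is structural: it writes $S_{\tau}=\{x+y:\supp(x)\in SQS_{\tau},\ y\in{\mathcal H}\times{\mathcal H}\}$, notes (from the necessity part of the proof of Theorem \ref{Theo1}) that any $\pi$ with $SQS_{\tau}\sim_\pi SQS_{\tau'}$ preserves $Q_0\cup Q_1$ and therefore preserves ${\mathcal H}\times{\mathcal H}$, which is spanned by the characteristic vectors of those quadruples, and then reads off $\pi(S_{\tau})=S_{\tau'}$ from the coset representation --- no explicit form of $\pi$ and no further computation is required. You instead invoke the full classification of $\pi$ from Theorem \ref{Theo1} and re-derive at the code level the three transformation identities that the paper establishes only for the quadruple systems; this is more computational but entirely sound (each identity checks out against the defining union (\ref{Soloveva_codes}), and the order of composition $\sigma_{a,A}=\sigma_a\sigma_A$ is handled correctly). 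One genuine advantage of your version is the explicit treatment of the affine case via Glagolev's theorem: the paper's appeal to ``the proof of Theorem \ref{Theo1}'' for the preservation of $Q_0\cup Q_1$ is, strictly speaking, justified there only for non-affine systems, so your case split patches a small gap that the paper glosses over. The trade-off is economy: once preservation of ${\mathcal H}\times{\mathcal H}$ is granted, the paper's coset argument disposes of the necessity in two lines, whereas your route repeats the sufficiency computations of Theorem \ref{Theo1} in the code setting.
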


\begin{proof}
The sufficiency is clear, we show the necessity. Each code $S_{\tau}$ could be
represented as follows:
$$S_{\tau}=\{x+y: supp(x)\in SQS_{\tau}, y\in {\mathcal H}\times {\mathcal H} \}.$$
From the proof of Theorem \ref{Theo1} we see that the permutation
$\pi$ such that $SQS_{\tau}\sim_\pi SQS_{\tau'}$ preserves
$Q_0\cup Q_1$. We conclude that $\pi$ also preserves ${\mathcal
H}\times {\mathcal H}$, which is spanned by the characteristic
vectors of the quadruples from $Q_0 \cup Q_1$. Therefore we have
$S_{\tau}\sim_\pi S_{\tau'}$.\end{proof}

\begin{corollary}\label{C3}
The groups  $\PAut(S_{\tau})$ and $\Aut(SQS_{\tau})$ are
isomorphic.
\end{corollary}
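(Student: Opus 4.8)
The plan is to deduce this directly from Theorem \ref{Theo2} by specializing $\tau'=\tau$. The first step is to observe that the two groups in question act on \emph{the same} underlying set, so that it makes sense to compare them as subgroups rather than merely up to abstract isomorphism. Indeed, $S_{\tau}$ has length $2^{r+1}$ and its coordinate positions are naturally indexed by the symbols $(\{a\},\emptyset)$ and $(\emptyset,\{a\})$ for $a\in F^r$; this is precisely the pointset of $SQS_{\tau}$. Hence both $\PAut(S_{\tau})$ and $\Aut(SQS_{\tau})$ are subgroups of the symmetric group $S_{2^{r+1}}$ on this common set, and it suffices to prove they coincide (which is stronger than the asserted isomorphism).

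The second step is to rewrite membership in each group in the $\sim_{\pi}$ notation. By definition $\pi\in\PAut(S_{\tau})$ means $\pi(S_{\tau})=S_{\tau}$, that is, $S_{\tau}\sim_{\pi}S_{\tau}$ (the all-zero translation is exactly what the $\sim_{\pi}$ relation records). Likewise, by the definition of the automorphism group of a Steiner quadruple system, $\pi\in\Aut(SQS_{\tau})$ means that $\pi$ sends the quadruples of $SQS_{\tau}$ to themselves, i.e.\ $SQS_{\tau}\sim_{\pi}SQS_{\tau}$. Applying Theorem \ref{Theo2} with $\tau'=\tau$ then yields the equivalence $S_{\tau}\sim_{\pi}S_{\tau}\iff SQS_{\tau}\sim_{\pi}SQS_{\tau}$, so the two membership conditions describe the same set of permutations and $\PAut(S_{\tau})=\Aut(SQS_{\tau})$ as subgroups of $S_{2^{r+1}}$; in particular they are isomorphic.

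I do not expect a genuine obstacle at this stage, since all of the substance has been absorbed into Theorem \ref{Theo2}: its sufficiency direction is immediate (the weight-$4$ supports of $S_{\tau}$ form $SQS_{\tau}$), and its necessity direction already shows that a permutation preserving $SQS_{\tau}$ must preserve ${\mathcal H}\times{\mathcal H}$, which is spanned by the characteristic vectors of the quadruples of $Q_0\cup Q_1$, hence preserves the whole code. The only point that requires care is the definitional bookkeeping that makes the corollary meaningful, namely checking that the $\sim_{\pi}$ relation for codes isolates the permutation part (so it matches $\PAut(S_{\tau})$ and not the larger translation-augmented group $\Aut(S_{\tau})$), and that the identification of the pointset of $SQS_{\tau}$ with the coordinate set of $S_{\tau}$ is the one under which Theorem \ref{Theo2} was proved. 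Once these identifications are fixed, the corollary is a one-line specialization of that theorem.
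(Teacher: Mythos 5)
Your proof is correct and is exactly the argument the paper intends: the corollary is stated as an immediate consequence of Theorem \ref{Theo2} with $\tau'=\tau$, and your careful identification of the coordinate set of $S_{\tau}$ with the pointset of $SQS_{\tau}$ (so that the two groups actually coincide as subgroups of the symmetric group, not merely up to abstract isomorphism) is the same bookkeeping the paper implicitly relies on.
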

The necessity of the following statement is by Proposition \ref{Prop1} and the
sufficiency
 is by Theorem \ref{Theo2}.
\begin{corollary}\label{C2}
Let $\tau$ and $\tau'$ be  permutations on the vectors of
 $F^r$ that fix ${\bf 0}$. Let $S_{\tau}$ and $S_{\tau'}$ be transitive codes of length $2^{r+1}$. Then
$S_{\tau}\sim_{(x,\pi)} S_{\tau'}$ for a vector $x$ and a
permutation $\pi$ if and only if $SQS_{\tau}\sim_{\pi'}
SQS_{\tau'}$ for some permutation $\pi'$.
\end{corollary}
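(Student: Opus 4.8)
The plan is to prove the two implications separately, exploiting the fact that both $S_{\tau}$ and $S_{\tau'}$ contain the all-zero vector (since $\tau$ and $\tau'$ fix ${\bf 0}$) and are transitive by hypothesis, so that Proposition \ref{Prop1} is available for them. The whole point is that Theorem \ref{Theo2} already matches code isomorphisms with $SQS$ isomorphisms \emph{realized by the same permutation}, so the only gap to bridge is the translation part $x$ appearing in $S_{\tau}\sim_{(x,\pi)} S_{\tau'}$, and that is exactly what Proposition \ref{Prop1} removes.

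For the sufficiency direction, I would suppose $SQS_{\tau}\sim_{\pi'} SQS_{\tau'}$ for some permutation $\pi'$. Here there is essentially nothing to do beyond invoking the direction of Theorem \ref{Theo2} that passes from the $SQS$ isomorphism to the code isomorphism: it yields $S_{\tau}\sim_{\pi'} S_{\tau'}$ immediately, and taking $x={\bf 0}$ and $\pi=\pi'$ gives the desired $S_{\tau}\sim_{({\bf 0},\pi')} S_{\tau'}$.

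For the necessity direction, I would suppose $S_{\tau}\sim_{(x,\pi)} S_{\tau'}$ for some $x$ and $\pi$. First I would apply Proposition \ref{Prop1}, whose hypotheses hold because $S_{\tau}$ and $S_{\tau'}$ are transitive codes containing ${\bf 0}$; it produces a single permutation $\pi'$ with $S_{\tau}\sim_{\pi'} S_{\tau'}$, absorbing the translation $x$ into the permutation. Then I would feed this pure permutation isomorphism into the other direction of Theorem \ref{Theo2} to conclude $SQS_{\tau}\sim_{\pi'} SQS_{\tau'}$.

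I do not expect a genuine obstacle here: both the real structural work (relating the two kinds of isomorphism through a common permutation) and the translation-stripping have been isolated into Theorem \ref{Theo2} and Proposition \ref{Prop1} respectively. The only point that requires care is checking the hypotheses before invoking Proposition \ref{Prop1}, namely transitivity and zero-containment; the former is assumed in the statement and the latter follows from $\tau({\bf 0})=\tau'({\bf 0})={\bf 0}$. Once these are in place the corollary is a two-step combination of the cited results.
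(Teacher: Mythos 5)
Your proof is correct and follows essentially the same route as the paper, which disposes of the corollary in one line: necessity via Proposition~\ref{Prop1} (stripping the translation $x$, using transitivity and the fact that $\tau({\bf 0})=\tau'({\bf 0})={\bf 0}$ guarantees zero-containment) and sufficiency via Theorem~\ref{Theo2}. The only cosmetic difference is that you explicitly invoke Theorem~\ref{Theo2} again to pass from $S_{\tau}\sim_{\pi'}S_{\tau'}$ to $SQS_{\tau}\sim_{\pi'}SQS_{\tau'}$, a step the paper treats as immediate since a coordinate permutation preserves weight-$4$ codewords.
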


%\begin{corollary}\label{Coroeq}
%Let $\tau$  be a permutation on the vectors of
%$F^r$ that fixes ${\bf 0}$ and  $S_{\tau}$  is not  an extended Hamming code. Then $S_{{\mathcal H },\tau}$ is coodinate transitive
%$\tau^{-1}({\mathcal H})\cap {\mathcal H}\sim_{A}\tau({\mathcal H})\cap {\mathcal H}$ for some $A\in \GL(r,2)$.
%\end{corollary}

\section{An infinite series of coordinate transitive and  neighbor transitive extended perfect codes}
We recall some concepts from \cite{MS19}. A subgroup $G$ of the
general affine group $\GA(r,2)$ is called {\it regular} if it is
regular with respect to the action (\ref{actionGA}) on the vectors
of $F^r$. By the definition for any regular subgroup $G$ of the
group $\GA(r,2)$ and any $a\in F^r$ there is a unique affine
transformation that maps ${\bf 0}$ to $a$, which  we denote by
$g_a$. Obviously, $g_a$ is $(a,M)$ for some matrix  $M$ in
$\GL(r,2)$. Let  $T$ be an automorphism of a regular subgroup  $G$
of the group $\GA(r,2)$. By $\tau$ we denote the permutation on
the vectors of $F^{r}$ {\it induced} by the action of the
automorphism  $T$, i.e.
$$T(g_a)=g_{\tau(a)}.$$
Obviously we always have $\tau({\bf 0})={\bf 0}$.
 The following class of propelinear codes was obtained in \cite{MS19}.

\begin{theorem}\label{Theoprev}\cite{MS19} Let $G$  be a regular subgroup of $\GA(r,2)$ and $\tau$ be the permutation induced by an automorphism of $G$.
The following hold:

1. The code $S_{\tau}$ is a propelinear extended perfect binary
code of length $2^{r+1}$.

2. Let $\tau'$ be the permutation induced by an automorphism of
$\GA(r',2)$. Then $(\tau|\tau')$ is the  permutation induced by an
automorphism of $\GA(r+r',2)$, in particular $S_{\tau|\tau'}$ is a
propelinear extended perfect binary code of length $2^{r+r'+1}$.

3. If $S_{\tau}$ has the kernel of dimension $2^{r+1}-2r-2$, then
$S_{\tau}$ is not a Mollard code. If additionally $\tau'$ is a
permutation of the vectors of $F^{r'}$, $\tau'({\bf 0})={\bf 0}$
and $S_{\tau'}$ has the kernel of dimension $2^{r'+1}-2r'-2$ then
$S_{\tau|\tau'}$ is a non-Mollard code with the kernel of
dimension $2^{r'+r+1}-2(r'+r)-2$.

4. Propelinear non-Mollard codes $S_{\tau}$ of length
$2^{r+1}$ with the kernel of dimension $2^{r+1}-2r-2$  exist for
any $r, r\geq 3$.

\end{theorem}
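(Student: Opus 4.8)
The plan is to prove the four parts in order, deriving parts~2--4 from the propelinear structure built in part~1; throughout, the only hypothesis that does real work is that $T$ is an \emph{automorphism} of $G$. \emph{Part 1.} That $S_\tau$ is extended perfect of length $2^{r+1}$ is already guaranteed by the concatenation construction \cite{Sol1981}, so only propelinearity must be shown. Write $g_a=(a,M_a)\in G$ for the unique element with $g_a({\bf 0})=a$, let $\sigma_{g_a}:=\sigma_{a,M_a}\in\Aut({\mathcal H})$ be the induced coordinate permutation, and put $L={\mathcal H}\times{\mathcal H}$. Since $\tau({\bf 0})={\bf 0}$ we may write $S_\tau=\bigcup_{a\in F^r}(w_a+L)$ with $w_a=(e_a+e_{\bf 0})\,|\,(e_{\tau(a)}+e_{\bf 0})$, so $L\subseteq S_\tau$ is a linear subcode and each translation $(\ell,\mathrm{id})$, $\ell\in L$, fixes $S_\tau$. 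For each $a$ I would set $\Phi_a=(w_a,\ \sigma_{g_a}\,|\,\sigma_{g_{\tau(a)}})$. The key point is that $\Phi_a\in\Aut(S_\tau)$: it sends the coset indexed by $b$ to the coset indexed by $g_a(b)$, and the second block matches up exactly because, $T$ being an automorphism, the relation $g_ag_b=g_{g_a(b)}$ forces the commutation $\tau\circ g_a=g_{\tau(a)}\circ\tau$ on $F^r$. A routine computation with the composition law $(x,\pi)(y,\pi')=(x+\pi(y),\pi\circ\pi')$ then gives $\Phi_a\Phi_{a'}=\Phi_{g_a(a')}$ and $w_a+(\sigma_{g_a}|\sigma_{g_{\tau(a)}})(w_{a'})=w_{g_a(a')}$, so $P=\{(\ell,\mathrm{id})\Phi_a:\ell\in L,\ a\in F^r\}$ is a subgroup of $\Aut(S_\tau)$ (closed because $\sigma_{g_a}|\sigma_{g_{\tau(a)}}$ preserves $L$). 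As $\Phi_a(\zero)=w_a$ and the $L$-translations sweep out each coset, $P$ acts transitively on $S_\tau$, and since $|P|=|L|\cdot2^r=2^{2^{r+1}-r-2}=|S_\tau|$ the action is regular; hence $S_\tau$ is propelinear.

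\emph{Part 2.} Let $G\le\GA(r,2)$ and $G'\le\GA(r',2)$ be the regular subgroups with automorphisms $T,T'$ inducing $\tau,\tau'$. Their block-diagonal copy $G\times G'$ inside $\GA(r+r',2)$ acts regularly on $F^r\times F^{r'}=F^{r+r'}$, and $T\times T'$ is an automorphism of it. Because the element sending ${\bf 0}$ to $(a,a')$ is $(g_a,g'_{a'})$, the induced permutation sends $(a,a')$ to $(\tau(a),\tau'(a'))$, which is precisely $(\tau|\tau')$; part~1 now yields that $S_{\tau|\tau'}$ is propelinear extended perfect of length $2^{r+r'+1}$.

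\emph{Part 3.} From the coset description, $w_a\in\Ker(S_\tau)$ iff for every $b$ there is $c$ with $w_a+w_b+w_c\in L$; the first block forces $c=a+b$, and the second block then demands $\tau(a+b)=\tau(a)+\tau(b)$. Hence $\Ker(S_\tau)=L\oplus\{w_a:a\in K_\tau\}$, where $K_\tau=\{a\in F^r:\tau(a+b)=\tau(a)+\tau(b)\ \forall b\}$, so $\dim\Ker(S_\tau)=2^{r+1}-2r-2$ exactly when $K_\tau=\{{\bf 0}\}$. Since $(a,a')\in K_{\tau|\tau'}$ iff $a\in K_\tau$ and $a'\in K_{\tau'}$, one has $K_{\tau|\tau'}=K_\tau\times K_{\tau'}$, so a minimal kernel is inherited by concatenation, giving $\dim\Ker(S_{\tau|\tau'})=2^{r+r'+1}-2(r+r')-2$. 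The implication ``minimal kernel $\Rightarrow$ non-Mollard'' is the step I expect to be hardest: here I would prove a lower bound on $\dim\Ker(M(C,D))$ valid for \emph{every} extended perfect Mollard code of length $n=2^{r+1}$ and \emph{every} function $\phi$, and check it strictly exceeds $n-2\log_2n=2^{r+1}-2r-2$. The difficulty is the arbitrariness of $\phi$: one must argue that the row/column period structure imposed by the Mollard construction always forces strictly more periods than this minimum, so that no Mollard code attains such a small kernel.

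\emph{Part 4.} By parts~1 and 3 it suffices, for each $r\ge3$, to exhibit a regular $G\le\GA(r,2)$ and an automorphism whose induced $\tau$ satisfies $K_\tau=\{{\bf 0}\}$ (necessarily with $G$ non-translational, since the translation group yields linear $\tau$ and $K_\tau=F^r$). I would verify explicit base examples for $r\in\{3,4,5\}$ by directly computing $K_\tau$, and then obtain every $r\ge6$ by writing $r$ as a sum of $3$'s, $4$'s and $5$'s and concatenating the corresponding permutations: by part~2 the result is again induced by an automorphism of a regular subgroup, and by the multiplicativity $K_{\tau|\tau'}=K_\tau\times K_{\tau'}$ the kernel stays minimal, so each resulting $S_\tau$ is propelinear, non-Mollard, and of kernel dimension $2^{r+1}-2r-2$.
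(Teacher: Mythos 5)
This theorem is quoted in the paper from \cite{MS19} as background material and is given no proof here, so there is no in-paper argument to compare against; your proposal has to be judged on its own. Parts 1 and 2 are complete and correct: the identity $g_ag_b=g_{g_a(b)}$, the commutation $\tau\circ g_a=g_{\tau(a)}\circ\tau$ extracted from $T$ being an automorphism, and the resulting group $P$ of order $|L|\cdot 2^r=|S_\tau|$ acting transitively is exactly the right mechanism for propelinearity, and the block-diagonal embedding $G\times G'\le \GA(r+r',2)$ handles part 2 (note the paper's phrase ``automorphism of $\GA(r',2)$'' is evidently a slip for ``automorphism of a regular subgroup of $\GA(r',2)$,'' and you read it correctly). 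The computation $\Ker(S_\tau)=L\oplus\{w_a:a\in K_\tau\}$ and the multiplicativity $K_{\tau|\tau'}=K_\tau\times K_{\tau'}$ are also correct.

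The genuine gap is the implication ``kernel of dimension $2^{r+1}-2r-2$ implies non-Mollard,'' which you explicitly defer. This is not a routine verification, and the natural first attempt fails: the subspace $\{z:p_1(z)=p_2(z)={\bf 0}\}$ always lies in the kernel of $M(C,D)$ and has dimension $tm-t-m+1$, but for $n=32$ with component lengths $t=4$, $m=8$ this gives only $21<22$, so this bound alone does not exclude Mollard codes of minimal kernel; one must account for additional kernel vectors forced by the structure of $C$, $D$ and $\phi$ (this is precisely what is done in \cite{MS19}, which is why the present paper imports the statement rather than reproving it). Secondly, part 4 rests on base cases $r\in\{3,4,5\}$ with $K_\tau=\{{\bf 0}\}$ for $\tau$ induced by an automorphism of a regular subgroup; these are asserted but not exhibited, and they are the entire content of the existence claim (your reduction of $r\ge 6$ to sums of $3$, $4$, $5$ via parts 2 and 3 is fine once they are in hand). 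Until the Mollard lower bound and the base examples are supplied, the proposal proves parts 1 and 2 but only reduces parts 3 and 4 to the statements that actually carry the difficulty.
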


We first consider neighbor transitive extended perfect codes of
small length. Then we iteratively  construct coordinate transitive
(neighbor transitive) extended perfect codes of any admissible
length not equal to 64.

\begin{theorem}\label{Theo16}
For any permutation $\tau$ on the vectors of $F^3$,   $\tau({\bf
0})={\bf 0}$, the extended perfect code $S_{\tau}$ of length $16$ is
  neighbor transitive. There are exactly four isomor\-phism
classes for these codes and they are characterized by their ranks
that take values $11$, $12$, $13$, $14$. The code of rank $14$ has dimension of kernel $8$.
%% and one of the codes is non-Mollard.
\end{theorem}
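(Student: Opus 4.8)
The plan is to reduce the whole statement to one linear-algebraic invariant of $\tau$ and then to classify. By \cite{GP3}, for a code of minimum distance at least three containing ${\bf 0}$ neighbor transitivity is equivalent to transitivity together with coordinate transitivity; as each $S_\tau$ is extended perfect (distance $4$) and contains ${\bf 0}$, it is enough to establish these two properties and the count of isomorphism classes. By Corollaries \ref{C3} and \ref{CoroSeq}, coordinate transitivity of $S_\tau$ means point transitivity of $SQS_\tau$, i.e. $\tau^{-1}\in\GL(3,2)\,\tau\,\GL(3,2)$; and by Theorems \ref{Theo1}, \ref{Theo2} with Corollary \ref{C2}, the isomorphism classes of the $S_\tau$ correspond to the classes of permutations fixing ${\bf 0}$ under the relation generated by $\tau\mapsto B\tau A^{-1}$ ($A,B\in\GL(3,2)$) and $\tau\mapsto\tau^{-1}$. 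So for $r=3$ I must count these classes and supply a computable complete invariant.

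First I would derive a rank formula. With $v_a=(e_a+e_{\bf 0})|(e_{\tau(a)}+e_{\bf 0})$ we have $S_\tau=\bigcup_a({\mathcal H}\times{\mathcal H})+v_a$; passing to the syndrome of ${\mathcal H}$ (support-sum together with parity), the parity parts of $v_a$ vanish and $v_a\mapsto(a,\tau(a))\in F^3\times F^3$. Hence $\rank(S_\tau)=\dim({\mathcal H}\times{\mathcal H})+\dim W=8+\dim W$, where $W=\langle(a,\tau(a)):a\in F^3\rangle$, and peeling off preimages of a basis of the first factor gives $\dim W=3+\dim U_\tau$ with $U_\tau=\langle\tau(a+b)+\tau(a)+\tau(b)\rangle$ the span of the linearity defects. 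Thus $\rank(S_\tau)=11+\dim U_\tau$ and $\dim U_\tau\in\{0,1,2,3\}$, so the ranks are exactly $11,12,13,14$, the value $0$ being the affine case of Proposition \ref{prop_affine}. A useful lemma I would record is that $\tau(p+q)+\tau(p)+\tau(q)$ is unchanged when $(p,q)$ is replaced by $(p,p+q)$ or $(q,p+q)$, so the defect depends only on the plane $\langle p,q\rangle$; hence $U_\tau$ is the span of the values of a function $\delta$ on the seven planes of $F^3$, transforming by $\delta'(P)=B\,\delta(A^{-1}P)$ under $\tau\mapsto B\tau A^{-1}$. Finally $SQS_{\tau^{-1}}=\xi(SQS_\tau)\cong SQS_\tau$, so $S_{\tau^{-1}}\cong S_\tau$ by Theorem \ref{Theo2} and $\dim U_{\tau^{-1}}=\dim U_\tau$; thus $\dim U_\tau$, equivalently $\rank(S_\tau)$, is invariant under the full equivalence.

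The heart of the proof, and the step I expect to be hardest, is the converse: that $\dim U_\tau$ is a \emph{complete} invariant, i.e. the $\GL(3,2)\times\GL(3,2)$-action on permutations fixing ${\bf 0}$ has exactly four orbits, one for each value of $\dim U_\tau$. I would establish this by a normal form: left multiplication by $B$ moves $U_\tau$ to a fixed coordinate subspace, right multiplication by $A$ permutes the seven planes through the Fano-plane action of $\GL(3,2)$, and one then matches any two bijective $\tau$ with equal defect space; at this size it reduces to a finite case analysis, and can be confirmed by enumeration over the $5040$ permutations. Granting it, each class is stable under inversion (inversion preserves $\dim U_\tau$), which simultaneously gives the four isomorphism classes labelled by ranks $11,12,13,14$ and the fact that $\tau^{-1}\in\GL(3,2)\,\tau\,\GL(3,2)$ for \emph{every} $\tau$; hence by Corollaries \ref{C3} and \ref{CoroSeq} every $S_\tau$ is coordinate transitive.

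It remains to get transitivity and the kernel. For transitivity I would verify, via the rank formula, that all four values of $\dim U_\tau$ already occur for permutations $\tau$ induced by automorphisms of regular subgroups of $\GA(3,2)$ (the value $0$ for the translation subgroup, the others for suitable regular subgroups --- a finite check); each such $S_\tau$ is propelinear, hence transitive, by Theorem \ref{Theoprev}. As transitivity is preserved under code isomorphism and, by the classification, every $S_\tau$ is isomorphic to one of these representatives, every $S_\tau$ is transitive; with coordinate transitivity this yields neighbor transitivity for all $\tau$ on $F^3$. For the kernel, the same syndrome analysis gives $\dim\Ker(S_\tau)=8+\dim K_\tau$ with $K_\tau=\{a:\tau(a+b)=\tau(a)+\tau(b)\text{ for all }b\}$ a subspace; since $\dim\Ker$ is an isomorphism invariant it suffices to compute $K_\tau=0$ on one rank-$14$ representative, whence the entire rank-$14$ class has kernel dimension $8$.
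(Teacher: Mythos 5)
Your proposal is correct and follows the same skeleton as the paper --- both reduce the theorem, via Theorems \ref{Theo1}, \ref{Theo2} and Corollaries \ref{C3}, \ref{CoroSeq}, \ref{C2}, to classifying the permutations of $F^3$ fixing $\bf 0$ up to the double-coset action of $\GL(3,2)\times\GL(3,2)$ together with inversion, then show there are exactly four classes, that each contains a propelinear representative coming from a regular subgroup of $\GA(3,2)$, and that $\tau^{-1}$ always lies in $\GL(3,2)\tau\GL(3,2)$. Where you genuinely diverge is in the choice of the separating invariant and the source of the rank values: the paper proves Lemma \ref{Propeq}, identifying double cosets with $\GL(3,2)$-classes of the intersection $\tau({\mathcal H})\cap{\mathcal H}$ (whose dimension takes the four values $1,2,3,4$), and then simply cites Table 1 of \cite{MS19} for the ranks $11,\dots,14$; you instead derive the self-contained syndrome formulas $\rank(S_\tau)=11+\dim U_\tau$ and $\dim\Ker(S_\tau)=8+\dim K_\tau$, with $U_\tau$ the span of the linearity defects $\tau(a+b)+\tau(a)+\tau(b)$, which makes the claimed correspondence between isomorphism classes and ranks, and the kernel value $8$ for the rank-$14$ class, transparent rather than imported. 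The trade-off is symmetric at the crux: your completeness claim for $\dim U_\tau$ (equivalently, that there are exactly four double cosets) is deferred to ``a finite case analysis / enumeration over the $5040$ permutations,'' while the paper's version rests on the asserted-but-unproved uniqueness, up to $\GL(3,2)$, of the subcode $\tau({\mathcal H})\cap{\mathcal H}$ of each dimension; neither argument is fully written out, so if you adopt your route you should state the four-orbit count as an explicit lemma and record the verification. Your handling of inversion (deducing $\dim U_{\tau^{-1}}=\dim U_\tau$ from $SQS_{\tau^{-1}}=\xi(SQS_\tau)$ and rank invariance) and of transitivity (transporting propelinearity across the classification) matches the paper's logic and is sound.
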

\begin{proof}
We now show the following natural description of the double cosets
of $\GL(r,2)$ in terms of intersections of Hamming codes.
\begin{lemma}\label{Propeq}
Let $\tau$ and $\tau'$ be permutations on the vectors of $F^r$
that fix ${\bf 0}$.
We have $\tau'\in \GL(r,2)\tau\GL(r,2)$  if and only if  $\tau({\mathcal H})\cap {\mathcal H}\sim_{A}\tau'({\mathcal H})\cap {\mathcal H}$ for some $A \in \GL(r,2)$.
\end{lemma}
\begin{proof}%%*
Let $\tau'$ be $A^{-1}\tau B$, $A$ and $B$ be in $\GL(r,2)$. We have
$A({\mathcal H})=B({\mathcal H})={\mathcal H}$ and the
following:
$$\tau'({\mathcal H})\cap {\mathcal H}=A^{-1}\tau B({\mathcal H})=A^{-1}\tau({\mathcal H})\cap {\mathcal H}\sim_{A}\tau({\mathcal H})\cap {\mathcal H}.$$\end{proof}
For the extended Hamming code ${\mathcal H}$ of length 8 and
arbitrary permutation $\tau$ on the vectors of $F^3$ there are
exactly four possible values for the dimension of $\tau({\mathcal
H})\cap {\mathcal H}$: $1,\,2,\,3,\,4$. Moreover, it is easy to
see that for any fixed dimension $k\in \{1,2,3,4\}$ there is
$\tau$ such that the code $\tau({\mathcal H})\cap {\mathcal H}$
has dimension $k$ and is unique up to a permutation of $\GL(3,2)$.

In view of Lemma \ref{Propeq} we see that there are exactly four
double cosets by $\GL(3,2)$  in the group of all permutations of
$F^3$ that fix ${\bf 0}$. By Theorems \ref{Theo1} and \ref{Theo2}
we conclude that there are not more than 4 isomorphism classes for
the codes. From \cite{MS19}, see e.g. Table 1, there are exactly
$4$ different values $\{11,12,13,14\}$ for the ranks for the
propelinear codes $S_{\tau}$ of length 16 where $\tau$ runs
through the permutations induced by the automorphisms of the
regular subgroups of $\GA(3,2)$. Therefore we have exactly $4$
isomorphism classes and each code is propelinear.

 Obviously for any permutation $\tau$ on the vectors of $F^3$,   $\tau({\bf
0})={\bf 0}$ we have that  $\tau^{-1}({\mathcal H})\cap {\mathcal
H}$ has the same dimension as $\tau({\mathcal H})\cap {\mathcal
H}$. Taking into account what was declared at the beginning of the
proof, there is $A\in\GL(3,2)$ such that $\tau^{-1}({\mathcal
H})\cap {\mathcal H}\sim_A\tau({\mathcal H})\cap {\mathcal H}$.
Then using Lemma \ref{Propeq} we see that $\tau^{-1}$ is in
$\GL(3,2)\tau\GL(3,2)$. By Corollaries \ref{CoroSeq} and \ref{C3}
we conclude that the code $S_{\tau}$ is coordinate transitive for
any $\tau$. Since $S_{\tau}$ is transitive, it is  neighbor
transitive.
%x
The code of rank $14$ has the dimension of the kernel $8$ (see \cite{MS19}) and therefore is a non-Mollard code by Theorem \ref{Theoprev}.
\end{proof}

\begin{theorem}\label{Theo32}
There are exactly $64$ isomorphism classes of propelinear extended
perfect codes $S_{\tau}$ of length $32$ with the dimension of kernel
$22$, where $\tau$ is a permuta\-tion induced by an automor\-phism
of a regular subgroup of $GA(4,2)$. All these codes are  neighbor
transitive non-Mollard.
\end{theorem}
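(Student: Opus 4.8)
The plan is to reduce all three assertions to a single computer-assisted classification of the permutations $\tau$ that arise from regular subgroups of $\GA(4,2)$, using the structural dictionary already established in Theorems \ref{Theo1} and \ref{Theo2}. The non-Mollard claim is immediate: for $r=4$ the kernel dimension $22$ equals $2^{r+1}-2r-2$, so Theorem \ref{Theoprev}(3) certifies that every such $S_\tau$ is non-Mollard, with no computation required.

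For the isomorphism count I would first convert code isomorphism into an equivalence on permutations. Each $S_\tau$ is propelinear, hence transitive, and contains $\zero$, so by Corollary \ref{C2} (resting on Proposition \ref{Prop1} and Theorem \ref{Theo2}) two such codes are isomorphic exactly when $SQS_\tau \sim_{\pi'} SQS_{\tau'}$ for some coordinate permutation $\pi'$. As the kernel-dimension-$22$ codes are non-affine, Theorem \ref{Theo1} then gives $S_\tau \cong S_{\tau'}$ if and only if $\tau' \in \GL(4,2)\,\tau\,\GL(4,2)$ or $\tau' \in \GL(4,2)\,\tau^{-1}\,\GL(4,2)$. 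Thus the isomorphism classes are the $\GL(4,2)$-double cosets of the admissible $\tau$, amalgamated by the involution $\tau \mapsto \tau^{-1}$.

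The crux is the single computational fact that every admissible $\tau$ (those induced by an automorphism of a regular subgroup of $\GA(4,2)$ with $S_\tau$ of kernel dimension $22$) satisfies $\tau^{-1} \in \GL(4,2)\,\tau\,\GL(4,2)$. Granting this, two consequences follow at once. First, by Corollary \ref{CoroSeq} each $SQS_\tau$ is point transitive; since the coordinates of $S_\tau$ are identified with the points of $SQS_\tau$ and $\PAut(S_\tau) \cong \Aut(SQS_\tau)$ by Corollary \ref{C3}, each $S_\tau$ is coordinate transitive, and being transitive it is therefore neighbor transitive \cite{GP3}. Second, the amalgamating involution $\tau \mapsto \tau^{-1}$ now merges nothing, so the isomorphism classes coincide with the $\GL(4,2)$-double cosets themselves. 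The remaining task is to enumerate these double cosets: I would list the regular subgroups of $\GA(4,2)$ up to conjugacy, compute their automorphism groups and the induced permutations via $T(g_a)=g_{\tau(a)}$, retain those yielding kernel dimension $22$ (using the kernel formula of \cite{MS19}), and partition them into double cosets, where the membership test $\tau'\in\GL(4,2)\tau\GL(4,2)$ is conveniently replaced by the permutation-equivalence test $\tau(\cH)\cap\cH \sim_A \tau'(\cH)\cap\cH$ of Lemma \ref{Propeq}. The claim is that this yields exactly $64$ classes.

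The main obstacle is entirely computational and concentrated in this last step. The set of permutations of $F^4$ fixing $\zero$ has size $15!$, far too large to sweep directly, so the feasibility of the whole argument depends on restricting to the structured family coming from regular subgroups and on the equivalence-test shortcut of Lemma \ref{Propeq}. I would also stress that, unlike the length-$16$ case of Theorem \ref{Theo16} where the intersection dimension $\dim(\tau(\cH)\cap\cH)$ was a complete double-coset invariant and was automatically preserved under inversion, for $r=4$ this dimension is no longer a complete invariant; hence the self-inverse property $\tau^{-1}\in\GL(4,2)\tau\GL(4,2)$ cannot be read off a single numerical invariant and must be certified directly for each of the $64$ representatives. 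Verifying this property for all representatives is what secures neighbor transitivity and simultaneously pins the count at $64$.
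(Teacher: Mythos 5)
Your proposal follows essentially the same route as the paper: reduce code isomorphism to SQS isomorphism via Corollary \ref{C2}, hence (for non-affine systems) to the double-coset relation of Theorem \ref{Theo1}; get non-Mollard and transitivity from Theorem \ref{Theoprev}; and delegate to a computer both the count of $64$ classes and the key fact that $\tau^{-1}\in\GL(4,2)\tau\GL(4,2)$ for every admissible $\tau$, which yields coordinate transitivity via Corollaries \ref{CoroSeq} and \ref{C3} and hence neighbor transitivity. The only substantive divergence is in how the computation is organized: the paper feeds the $1240$-block SQS's of order $32$ directly to MAGMA as designs, while you propose to classify the permutations $\tau$ into $\GL(4,2)$-double cosets, replacing the membership test by the criterion of Lemma \ref{Propeq}. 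Be careful with that last shortcut: the paper's proof of Lemma \ref{Propeq} only establishes the forward implication (same double coset $\Rightarrow$ $\GL$-equivalent intersections $\tau({\mathcal H})\cap{\mathcal H}$); the converse, which your test relies on, is not proved there, and for $r=4$ it is not clear that the intersection code up to $\GL$-equivalence is a complete double-coset invariant (for $r=3$ the paper patches this with an explicit uniqueness claim). Either certify that converse or test double-coset membership directly; with that repair your argument is sound and equivalent to the paper's.
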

\begin{proof}
%%!!
 Let $\tau$  run through the permutations induced
by the automorphisms of the regular subgroups %x
 of $\GA(4,2)$ such
that the corresponding codes $S_{\tau}$ of length $32$ have kernels
of dimension $22$. These codes are transitive non-Mollard codes by
Theorem \ref{Theoprev}. Their isomorphism problem is equivalent to
the isomorphism problem of their SQS's by Corollary \ref{C2}. This
problem for SQS of order $32$ with $1240$ quadruples could be solved
by a computer. By MAGMA we find that there are exactly $64$ isomorphism
classes of the SQS's of these codes.
 Moreover, again by a computer we
see that for any such code $S_{\tau}$ the permutations
 $\tau$ and $\tau^{-1}$ are in the same double coset of $\GL(4,2)$. By Corollary \ref{CoroSeq} we see that all
these codes are neighbor transitive.
\end{proof}
%\bigskip
%Let $\tau$ and $\tau^{\prime}$ be permutations on the vectors of
%$F^r$ and $F^{r^{\prime}}$ respectively such that $\tau({\bf
%0})=\tau^{\prime}({\bf 0})$. The map $\tau|\tau^{\prime}$ from
%$F^{r+r^{\prime}}$ into itself we define as
%$$(\tau|\tau^{\prime})(x|y)=\tau(x)|\tau^{\prime}(y),$$
%where $x\in F^r$, $y\in F^{r^{\prime}}$ and $x|y$ is the
%concatenation of the vectors $x$ and $y$ defined in Section 2.

The following could be considered as a direct product construction
for permuta\-tions providing coordinate transitive and neighbor
transitive codes.

\begin{theorem}\label{Theo6}
Let $\tau$ and $\tau'$ be  permutations on the vectors of
 $F^r$ and $F^{r'}$ that fix all-zero vectors. We have the
 following:

1. If SQS$_\tau$ and SQS$_{\tau^{\prime}}$ are point transitive
Steiner quadruple systems of orders $2^{r+1}$ and $2^{r^{\prime}+1}$
respectively then SQS$_{\tau|\tau^{\prime}}$ is a point
transitive Steiner quad\-ruple system of order $2^{r + r^{\prime}+1}.$

2. If $S_{\tau}$  and $S_{\tau'}$ are  coordinate transitive codes
then the code $S_{\tau|\tau'}$ is coordinate transitive.

3. If $\tau$ and $\tau'$ are induced by automorphisms of regular
subgroups of $\GA(r,2)$ and $\GA(r',2)$ respectively and $S_{\tau}$,
$S_{\tau'}$ are coordinate transitive codes then the code
$S_{\tau|\tau'}$ is neighbor transitive.
\end{theorem}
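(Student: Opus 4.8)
The plan is to reduce all three parts to the point-transitivity criterion of Corollary \ref{CoroSeq} together with the structural facts already established; parts 2 and 3 then follow from part 1 with little extra work.

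For part 1 I would invoke Corollary \ref{CoroSeq} in both directions. Point transitivity of $SQS_\tau$ and $SQS_{\tau'}$ furnishes matrices $A,B\in\GL(r,2)$ and $A',B'\in\GL(r',2)$ with $\tau^{-1}=A\tau B$ and $(\tau')^{-1}=A'\tau'B'$ in the notation of Remark 1. The aim is to verify the same criterion for the concatenation, namely $(\tau|\tau')^{-1}\in\GL(r+r',2)(\tau|\tau')\GL(r+r',2)$. The key observation is that the block-diagonal matrix $\hat A=\mathrm{diag}(A,A')\in\GL(r+r',2)$ induces on $F^{r+r'}$ exactly the permutation $\sigma_{\hat A}=\sigma_A|\sigma_{A'}$, while $(\tau|\tau')^{-1}=\tau^{-1}|(\tau')^{-1}$. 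Evaluating on a concatenated vector $x|y$ then gives
$$\sigma_{\hat A}(\tau|\tau')\sigma_{\hat B}=(\sigma_A\tau\sigma_B)\,|\,(\sigma_{A'}\tau'\sigma_{B'})=\tau^{-1}|(\tau')^{-1}=(\tau|\tau')^{-1},$$
with $\hat B=\mathrm{diag}(B,B')$, so the required double-coset membership holds and Corollary \ref{CoroSeq} yields point transitivity of $SQS_{\tau|\tau'}$.

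Part 2 follows at once. The coordinate positions of $S_\tau$ are precisely the points of $SQS_\tau$, and by Corollary \ref{C3} (together with Theorem \ref{Theo2}) the action of $\PAut(S_\tau)$ on coordinates coincides with that of $\Aut(SQS_\tau)$ on points; hence $S_\tau$ is coordinate transitive if and only if $SQS_\tau$ is point transitive, and similarly for $\tau'$ and $\tau|\tau'$. Applying part 1 to the point-transitive systems $SQS_\tau$ and $SQS_{\tau'}$ therefore gives coordinate transitivity of $S_{\tau|\tau'}$.

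For part 3 I additionally need transitivity. Since $\tau$ and $\tau'$ are induced by automorphisms of regular subgroups of $\GA(r,2)$ and $\GA(r',2)$, Theorem \ref{Theoprev} shows that $S_{\tau|\tau'}$ is a propelinear extended perfect code, and in particular transitive. By part 2 it is also coordinate transitive, and being an extended perfect code it has minimum distance $4\geq 3$ and contains the all-zero vector; the criterion of \cite{GP3}, that such a code is neighbor transitive exactly when it is transitive and coordinate transitive, then gives neighbor transitivity of $S_{\tau|\tau'}$. The only genuinely delicate point is in part 1: one must check that the product $\GL(r,2)\times\GL(r',2)$ embeds as block-diagonal matrices in $\GL(r+r',2)$ compatibly with the $|$-operation, i.e. that $\sigma_{\mathrm{diag}(A,A')}=\sigma_A|\sigma_{A'}$ and $(\tau|\tau')^{-1}=\tau^{-1}|(\tau')^{-1}$; once this compatibility is verified the remaining manipulations are routine.
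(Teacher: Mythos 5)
Your proof is correct and follows essentially the same route as the paper: part 1 via the block-diagonal matrices $\mathrm{diag}(A,A')$, $\mathrm{diag}(B,B')$ and Corollary \ref{CoroSeq}, part 2 via Corollary \ref{C3}, and part 3 by combining coordinate transitivity with the transitivity supplied by Theorem \ref{Theoprev}. The compatibility checks you flag at the end ($\sigma_{\mathrm{diag}(A,A')}=\sigma_A|\sigma_{A'}$ and $(\tau|\tau')^{-1}=\tau^{-1}|(\tau')^{-1}$) are exactly the routine verifications the paper leaves implicit.
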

\begin{proof}
1. By Corollary \ref{CoroSeq} we have $\tau^{-1}=A\tau B$ and
$\tau^{\prime^{-1}}=A^{\prime}\tau^{\prime} B^{\prime}$ for
appropriate matrices $A,\,B$ in $\GL(r,2)$ and $A^{\prime}, \,
B^{\prime}$ in $\GL(r^{\prime},2)$. Therefore
$$(\tau|\tau^{\prime})^{-1}=\left(%
\begin{array}{cc}
  A &  {\bf 0}_{r,r^{\prime}} \\
 {\bf 0}_{r^{\prime}, r} & A^{\prime}\\
\end{array}%
\right)\circ (\tau|\tau^{\prime})\circ\left(%
\begin{array}{cc}
  B &  {\bf 0}_{r,r^{\prime}} \\
 {\bf 0}_{r^{\prime}, r} & B^{\prime}\\
\end{array}%
\right),$$ here ${\bf 0}_{r,r^{\prime}}$ and
 ${\bf 0}_{r^{\prime}, r}$ are the all-zero $r\times r^{\prime}$ and  $r^{\prime}\times
 r$
matrices respectively and $\circ$ denotes the composition of the
permutations of the vectors from $F^{r+r'}$. Hence according to %z
Corollary \ref{CoroSeq} we see that  SQS$_{\tau|\tau^{\prime}}$ is
point transitive.

2. Follows from Corollary \ref{C3} and the first statement of this
theorem.

3. From the second statement of the current theorem we see that
the code $S_{\tau|\tau'}$ is  coordinate transitive. From the
second statement of Theorem \ref{Theoprev} we see that the code
$S_{\tau|\tau'}$ is transitive, so it is neighbor transitive.
\end{proof}

\begin{theorem}\label{Theo7}
For any $r\geq 3$, $r\neq 5$ there exists a  neighbor transitive
extended perfect code $S_{\tau|\tau^{\prime}}$ of length $2^{r+1}$
that is non-Mollard.
\end{theorem}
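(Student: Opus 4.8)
The plan is to build the required codes by concatenating two elementary building blocks of lengths $16$ and $32$ via the operation $\tau\mapsto\tau|\tau'$ of Theorem \ref{Theoprev}. As building blocks I would take the rank-$14$ code of length $16$ supplied by Theorem \ref{Theo16} and any one of the codes of length $32$ supplied by Theorem \ref{Theo32}; call the corresponding permutations $\tau_3$ on $F^3$ and $\tau_4$ on $F^4$. Both are induced by automorphisms of regular subgroups of $\GA(3,2)$ and $\GA(4,2)$, both underlying codes are neighbor transitive, and both attain the minimal kernel dimension for their length, namely $2^{4}-2\cdot 3-2=8$ and $2^{5}-2\cdot 4-2=22$; in particular both are non-Mollard by Theorem \ref{Theoprev}.

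The arithmetic heart of the argument is the numerical semigroup generated by $3$ and $4$. Since $\gcd(3,4)=1$, its Frobenius number is $3\cdot 4-3-4=5$, so every $r\geq 6$ can be written as $r=3a+4b$ with integers $a,b\geq 0$; adjoining the representations $3=3\cdot 1$ and $4=4\cdot 1$, every $r\geq 3$ with $r\neq 5$ admits such a decomposition, whereas $r=5$ does not. This is precisely the origin of the excluded value in the statement.

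Given such a representation $r=3a+4b$, I would set
$$\tau=\underbrace{\tau_3|\cdots|\tau_3}_{a}|\underbrace{\tau_4|\cdots|\tau_4}_{b},$$
a permutation on the vectors of $F^r$ fixing ${\bf 0}$, and verify the required properties by iterating along this decomposition (the operation $|$ being associative). For transitivity: by the second statement of Theorem \ref{Theoprev} applied repeatedly, each successive concatenation stays propelinear, so $S_{\tau}$ is propelinear and hence transitive. For coordinate transitivity: since each block is coordinate transitive, the second statement of Theorem \ref{Theo6} applied repeatedly shows $S_{\tau}$ is coordinate transitive, and a transitive coordinate transitive code is neighbor transitive. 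For the non-Mollard property: iterating the third statement of Theorem \ref{Theoprev} shows that $S_{\tau}$ has kernel of the minimal dimension $2^{r+1}-2r-2$, so $S_{\tau}$ is non-Mollard.

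The only genuine content beyond bookkeeping is the numerical-semigroup computation isolating $r=5$. The hard part, such as it is, is to make sure the very same permutations $\tau_3$ and $\tau_4$ simultaneously witness all three hypotheses needed downstream (coordinate transitive, induced by an automorphism of a regular subgroup, and minimal kernel dimension), so that the iterations of Theorem \ref{Theo6} and of Theorem \ref{Theoprev} can be run in parallel along the single decomposition $r=3a+4b$; once this is checked, neighbor transitivity and the non-Mollard conclusion are forced.
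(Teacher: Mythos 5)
Your proposal is correct and follows essentially the same route as the paper: the paper also builds $S_{\tau|\tau'}$ by iterated concatenation from the length-$16$ and length-$32$ base cases of Theorems \ref{Theo16} and \ref{Theo32}, invoking Theorem \ref{Theo6} for neighbor transitivity and the third statement of Theorem \ref{Theoprev} for the minimal kernel dimension and the non-Mollard property. The only difference is presentational: you make explicit the numerical-semigroup/Frobenius computation ($3\cdot 4-3-4=5$) that the paper leaves implicit in its phrase ``by induction, excluding length $64$.''
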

\begin{proof}
 The proof will be  done by induction. For the induction
base we use  Theorems \ref{Theo16} and  \ref{Theo32} %x
 with the
exception of the codes of length 64. Let $\tau$ and
$\tau^{\prime}$ be permutations induced by automorphisms of
regular subgroups of $\GA(r_1,2)$ and $\GA(r_2,2)$, $r=r_1 +r_2$
such that neighbor transitive codes $S_{\tau}$ and
$S_{\tau^{\prime}}$ are of length $2^{r_1+1}$ and $2^{r_2+1}$,
with kernels of dimension $2^{r_1+1}-2r_1$ and $2^{r_2+1}-2r_2$
respectively.
 Applying Theorem \ref{Theo6}
 we obtain a neighbor transitive
extended perfect code $S_{\tau|\tau^{\prime}}$ of length
$2^{r+1}$
that is non-Mollard by the third statement of Theorem
\ref{Theoprev}.
\end{proof}

%##
\noindent {\bf Remark 2.} Analogously to the approach described in
%##
 Theorem \ref{Theo7} new
class of codes could be separated from $Z_4$-linear codes via
ranks. From Theorem \ref{Theo16} we see that the non-Mollard code
has the dimension of the kernel $8$ and rank $14$, whereas at
least one of the codes classified in Theorem \ref{Theo32} has rank
$30$ (see \cite{MS19}). By induction as in Theorem \ref{Theo7} we
obtain  non-Mollard codes with large (prefull or preprefull)
ranks, see \cite[Corollary 2]{MS19}.
 We conclude that the obtained neighbor transitive codes are inequivalent to $Z_4$-linear codes from \cite{K1} and the Mollard codes.

\medskip

\noindent {\bf Remark 3.} %##
We note that almost all results of
this work and \cite{MS19} hold for the following construction of
Hadamard codes.

%##
Let us consider the following representation of a linear
$(2^r,r+1,2^{r-1})$ Hadamard code. We index the positions of
$F^{2^r}$ by the binary vectors of length $r$. Let $C_a$,
%##
$a\in
F^r$, be the code
%##
of length $2^r$ with the codewords that have the
following supports:
 $$\{x\in F^r: \, <x,a>=0 \}, \,  \{x\in F^r: \, <x,a>=1\},$$ where
$<\cdot,\cdot>$ is a scalar product.
%##
Obviously the set $\bigcup_{a\in F^r} C_a$ is a Hadamard code
%##
of length $2^r$.

Let $\tau$ be a permutation induced by an automorphism of a regular subgroup of $\GA(r,2)$. Then the following code
$$A_{\tau}=\bigcup_{a\in F^r} C_a\times C_{\tau(a)} $$
is a propelinear Hadamard code of length $2^{r+1}$. Analogously to the proof of Theorem \ref{Theo1} we have that the codes $A_{\tau}$
and $A_{\tau'}$ are isomorphic if and only if $\tau$ or $\tau^{-1}$ is in  $\GA(r,2)\tau'\GA(r,2)$. Thus the isomorphism classes of these Hadamard codes and those of the extended
perfect codes are in one-to-one correspondence.

\end{document}